\newtheoremstyle{thm}
    {}
		{}
    {}
		{}
    {\bfseries}
		{.}
    { }
		{}
\theoremstyle{thm}
\newtheorem{thm}{Theorem}[section]
\newtheorem{prop}[thm]{Proposition}
\newtheorem{dfn}[thm]{Definition}
\newtheorem{rem}[thm]{Remark}
\newtheorem{lem}[thm]{Lemma}
\newtheorem*{expl}{Example}
\newtheorem{cor}[thm]{Corollary}
\newcommand{\divr}{\textnormal{div}\,}
\newcommand{\curl}{\textnormal{curl}\,}
\newcommand{\pl}{\parallel}
\newcommand{\ran}{\mathrm{ran}\,}
\begin{document}

\title[Approximate symmetries of guiding-centre motion]{Approximate symmetries of guiding-centre motion}

\author{J W Burby$^1$, N Kallinikos$^2$ and R S MacKay$^2$}

\address{$^1$ Los Alamos National Laboratory, Los Alamos, NM 87545, USA}
\address{$^2$ Mathematics Institute, University of Warwick, Coventry CV4 7AL, UK}
\ead{nikos.kallinikos@warwick.ac.uk}

\begin{abstract}
Quasisymmetry builds a third invariant for charged-particle motion besides energy and magnetic moment. We address quasisymmetry at the level of approximate symmetries of first-order guiding-centre motion. We find that the conditions to leading order are the same as for exact quasisymmetry if one insists that the symmetry is purely spatial. We also generalise to allow for approximate phase-space symmetries, and derive weaker conditions. The latter recover ``weak quasisymmetry" as a subcase, thus we prove it is spatial only to leading order, but also that it implies the existence of a wider class of independent approximate conserved quantities. Finally, we demonstrate that magnetohydrostatics imposes quasisymmetry to leading order.
\end{abstract}

\submitto{\jpa}

\section{Introduction}

Quasisymmetry was proposed \cite{B} as a way to achieve magnetic confinement and is the design principle \cite{NZ} underlying several modern optimised stellarators, including NCSX (partially constructed at PPPL) and HSX (built and operated at the University of Wisconsin-Madison). It is, in essence, a spatial symmetry of first-order guiding-centre motion that guarantees integrability.

In a previous work \cite{BKM}, necessary and sufficient conditions were derived for the existence of quasisymmetry, treating both system and symmetry as exact. It is worth noting that these hold for all nonzero values of charge, mass and magnetic moment.

Nevertheless, an approximate symmetry may be just as good as an exact one, especially since the guiding-centre system is only an approximation. Recently, it was suggested \cite{RHB} that approximate considerations of guiding-centre motion can relax the conditions of quasisymmetry.

In this paper, however, we show that any approximate spatial symmetry of the first-order guiding centre system must satisfy the quasisymmetry conditions \cite{BKM} to lowest order. This result contradicts the result presented in \cite{RHB}, which asserts that when a magnetic field is ``weakly quasisymmetric" there exists an approximate spatial symmetry for first-order guiding-centre motion.

Generalising from spatial symmetries to phase-space symmetries, we also find necessary and sufficient conditions for approximate symmetries that transform the parallel velocity of the guiding centre in addition to its position. In this way, we provide a set of weaker restrictions for an approximate conserved quantity. This set includes the case of \cite{RHB}, which proves to be a linearly parallel-velocity-dependent symmetry in first-order. We thereby confirm the approximate conserved quantity deduced in \cite{RHB}, even though the corresponding symmetry is not purely spatial. Moreover, we show that weak quasisymmetry implies a broader class of approximate conserved quantities than the single invariant considered in \cite{RHB}, which  we derive and generalise even further for genuine phase-space symmetries. Finally, we show that under the magnetohydrostatic assumption, approximate symmetries reduce to quasisymmetry, as well.

\section{Guiding-centre motion}
\label{gcsection}

The very notion of guiding centre is built on an approximate symmetry. It assumes that the motion of charged particles admits approximately a rotational symmetry about the magnetic field. As a result, the magnetic moment is an adiabatic invariant. This allows to reduce the original charged-particle motion to a 2-degree-of-freedom system for the gyrocentre, which tracks or, to put it the other way round, guides the particle. Guiding-centre motion averages over the fast, small-radius gyration, and describes the system reduced under gyrosymmetry. 

There have been various formulations of the guiding-centre system that agree to first order of approximation. Here we follow Littlejohn's, without taking into account electric fields, time-dependence or relativistic effects, which can be treated though accordingly.

Guiding-centre motion involves different features of the magnetic field that come into play both in contravariant and covariant components. This suggests the language of differential forms as more appropriate. Calculations and results support its use for brevity and hopefully clarity. That being said, notions and notation are kept to a minimum.\footnote{In short, the main tools are as follows. For any vector field $u$, $L_u$ denotes the Lie derivative with respect to $u$, $i_u$ stands for the interior product of a form with $u$, and $u^\flat$ the corresponding 1-form. Finally, $[u,w]=L_uw$ stands for the commutator of any two vector fields $u,w$. The only relations used next are limited to basic properties among $L_u$, $i_u$, the exterior derivative $d$ and the wedge product $\wedge$.} For the calculus of differential forms, besides classical textbooks we refer to the recent tutorial \cite{M} specifically adapted to 3D and plasma physics.

Throughout this paper we consider a 3-dimensional oriented smooth Riemannian manifold $Q$ equipped with associated volume-form $\Omega$, and assume that the magnetic field $B$ is nowhere zero on $Q$. We set $M=Q\times\mathbb{R}$, and also assume enough smoothness for all objects on $M$, wherever needed.

Let $x$ and $v_\pl$ denote the position and reduced velocity of the guiding centre, respectively. We think of $z=(x,v_\pl)$ as a point of $M$. Following \cite{L1}, the equations of first-order guiding-centre (FGC) motion for normalised constants ($m=q=1$) read
\begin{eqnarray}
\label{gc1}\eqalign{
\dot{x}&=\tilde{B}_\parallel^{-1}(v_\pl\tilde{B}+\epsilon\mu\,b\times\nabla|B|),\\
\dot{v}_\pl&=-\,\mu\tilde{B}_\parallel^{-1}\tilde{B}\cdot\nabla|B|,}
\end{eqnarray}
where $b=B/|B|$, $\tilde{B}=B+\epsilon v_\pl\curl b$ is the so called modified magnetic field, $\tilde{B}_\parallel=\tilde{B}\cdot b$, $\mu$ is the value of the magnetic moment, and $\epsilon$ is a scaling parameter that indicates the order of the guiding-centre approximation. For a weakly inhomogeneous magnetic field, $\epsilon\ll1$ says that the magnetic field varies slowly within a gyroradius $\rho$ and a gyroperiod $\tau$. This can be expressed as $\rho/L, \tau/T\propto\epsilon$, where $L$ and $T$ stand for the characteristic lengths and time (seen by the particle) over which $B$ changes appreciably. As both $\rho$ and $\tau$ are inversely proportional to the gyrofrequency $\Omega_B=q|B|/m$, one may adopt $\epsilon=m/q$ and treat $\mu$ as the magnetic moment per unit mass instead of normalisation.

An equivalent way to express the above system is
\begin{eqnarray}
\label{gc2}\eqalign{
\tilde{B}\times\dot{x}+\epsilon\dot{v}_\pl b+\epsilon\mu\nabla|B|=0,\\
b\cdot\dot{x}-v_\pl=0,}
\end{eqnarray}
explicitly defining $v_\pl$ as the component of the guiding centre velocity that is parallel to the magnetic field. Although (\ref{gc1}) is in solved form, (\ref{gc2}) is often more preferable to use and in fact precedes it in a Hamiltonian derivation.

In this form, the system admits a Hamiltonian formulation in the sense of $i_V\omega=-\,dH$ for $V=(\dot{x},\dot{v}_\pl)$, where the symplectic form $\omega$ and Hamiltonian function $H$ on $M$ (minus the set where $\tilde{B}_\pl=0$) are given by
\begin{eqnarray}
\label{symplectic}
\omega=\beta+\epsilon d(v_\pl b^\flat)\\
\label{hamiltonian}
H(x,v_\pl)=\epsilon(v_\pl^2/2+\mu|B|(x)).
\end{eqnarray}
Here $\beta=i_B\Omega$ is a 2-form on $M$ expressing the magnetic flux, and the projection from $M$ to $Q$ that pulls back $\beta$ and $b^\flat$ is dropped to simplify notation. Note that $d\beta=(\divr B)\Omega=0$, that is, $\beta$ is closed since $B$ is divergence-free.

In terms of the modified vector potential $\tilde{A}=A+\epsilon v_\pl b$, system (\ref{gc2}) is also formulated as a variational problem described by the Lagrangian \cite{L1}
\begin{equation}
\label{lagrangian}
L(x,v_\pl,\dot{x},\dot{v}_\pl)=\tilde{A}(x,v_\pl)\!\cdot\dot{x}-H(x,v_\pl),
\end{equation}
or equally the Poincar\'e-Cartan form $\alpha=Ldt=\tilde{A}^\flat-Hdt$ on extended state space \cite{L2}. The magnetic potential $A^\flat$ always exists locally, since by the Poincar\'e lemma the closed magnetic flux form $\beta$ is locally also exact on $Q$, i.e., $\beta=dA^\flat$. If $\int_S\beta=0$ for all surfaces $S$ representing the second homology group $H_2(Q)$, then $A^\flat$ is global.

\section{Approximate Hamiltonian symmetries}
\label{symmsection}

Approximate symmetries were introduced in \cite{BGI} in a framework very close to Lie's symmetry groups. An independent approach was presented in \cite{CG1} with a particular focus on dynamical systems and connections with normal forms. See also \cite{IK,CG2} for more reading. Here we adapt some of these notions to a Hamiltonian setup.

There are two things that a symmetry of an approximate system being approximate means. The first is that the symmetry as a transformation is approximate, and the second is that the symmetry leaves the system approximately invariant. For consistency, the order of approximation in both cases is the same as the system's. By symmetry in this paper, we will mean continuous symmetries on $M$.

The key ingredient to quantify approximate methods is that any object that depends on a small parameter $\epsilon$ is considered only up to terms $O(\epsilon^k)$ for some integer $k$. To apply this, it is useful to work on the equivalence class of functions described as follows. Any two functions $f,g$ that differ by $O(\epsilon^{k+1})$-terms are regarded as equal. To relax notation, we express this by writing
\begin{equation}
\label{appclass}
f(z;\epsilon)=g(z;\epsilon)+O(\epsilon^{k+1})~\Leftrightarrow~f(z;\epsilon)\approx g(z;\epsilon)
\end{equation}
for some fixed $k$.

Each equivalence class $[f]$ under $\approx$ has a natural representative, namely the $k$th-order Taylor polynomial of $f$ in $\epsilon$ about $\epsilon=0$. Thus, we can think of any $C^k$ function $f$ in $\epsilon$ defined on some manifold $M$ as
\begin{equation}
f(z;\epsilon)\approx\sum\limits_{i=0}^k\epsilon^if_i(z),
\end{equation}
$z\in M$. We do the same for any $\epsilon$-dependent differential form, vector field, and mapping whatsoever on $M$, assuming they are sufficiently smooth in a neighbourhood of $\epsilon=0$.

In the following we simply let $k=1$, as the forthcoming notions straightforwardly generalise to any order of approximation. So, the term ``approximate'' from now on will mean approximate of first order, unless stated otherwise.

\begin{dfn}
An approximate dynamical system on a manifold $M$ is a system of ordinary differential equations $\dot{z}=V(z;\epsilon)$ with $V\approx V_0+\epsilon V_1$, where $V_0,V_1$ are vector fields on $M$.
\end{dfn}

Under the equivalence $\approx$, note that any system that agrees up to first order with the vector field $V$ will do. We can express this by replacing $=$ with $\approx$ in (\ref{gc1}). Within this class, it is useful to work with Littlejohn's representative system that has a Hamiltonian structure.

We think of Hamiltonian systems in terms of symplectic forms, i.e., nondegenerate closed 2-forms on $M$. Relaxing the nondegeneracy requirement, presymplectic forms are just closed 2-forms. In the approximate setting, we ask for 

\begin{dfn}\label{Hamiltonian}
An approximate Hamiltonian system $(\omega,H)$ on a manifold $M$ is an approximate dynamical system $V$ that satisfies
\begin{equation}
\label{hamequation}
i_V\omega=-\,dH
\end{equation}
for $\omega=\omega_0+\epsilon\,\omega_1$, $H=H_0+\epsilon H_1$, where $\omega_0$ is a symplectic form, $\omega_1$ is a presymplectic form and $H_0,H_1$ are functions all on $M$.
\end{dfn}

We start by making precise the first aspect, what is an approximate transformation of a system.
\begin{dfn}
An approximate transformation on a manifold $M$ is a smooth map $\Phi:M\times I\longrightarrow M$ with $\Phi(z;\epsilon)\approx\Phi_0(z)+\epsilon\,\Phi_1(z)$, $z\in M$ that is invertible for each $\epsilon\in I$, where $I\subset\mathbb{R}$ is open and contains 0.
\end{dfn}

Continuous transformations means there is a family of transformations that depend continuously on a parameter in a manifold of dimension at least 1. Typically this family is required to form a group. In the approximate context, we have the following notion.

\begin{dfn}\label{dfn:appgroup}
A one-parameter approximate transformation group on a manifold $M$ is a set of approximate transformations $\Phi^\tau$ such that\vspace{-0.1cm}
\begin{enumerate}
\item $\Phi^\tau\approx\textnormal{Id}\textnormal{ iff } \tau=0$,
\item $\Phi^{\tau_1}\circ\Phi^{\tau_2}\approx\Phi^{\tau_1+\tau_2}$\vspace{-0.1cm}
\end{enumerate}
for all $\tau,\tau_1,\tau_2\in G$, where $G\subset\mathbb{R}$ is open and contains 0.
\end{dfn}

\begin{dfn}\label{dfn:appgen}
The infinitesimal generator of a one-parameter approximate transformation group $\Phi^\tau$ on a manifold $M$ is defined by
\begin{equation}
\label{eq:appgen}
U(z;\epsilon)\approx\left.\frac{d\Phi^\tau(z;\epsilon)}{d\tau}\right|_{\tau=0}
\end{equation}
\end{dfn}

The converse to this relation, which builds the group from the generator, is given by the solution $\tilde{z}=\Phi^\tau(z;\epsilon)$ to the initial-value problem $d\tilde{z}/d\tau\approx U(\tilde{z})$, $\tilde{z}(0)\approx z$. Equivalently, it can be constructed from the exponential map in the approximate sense, $\Phi^\tau\approx\exp(\tau U)$, where the exponential of a vector field is defined by following it for time one.

Combining Definitions \ref{dfn:appgroup} and \ref{dfn:appgen}, we see that the generator (\ref{eq:appgen}) is a vector field of the form $U=U_0+\epsilon U_1$, where $U_0=\left.d\Phi^\tau_0/d\tau\right|_{\tau=0}$ and $U_1=\left.d\Phi^\tau_1/d\tau\right|_{\tau=0}$.

Moving to the second point, a $k$th-order approximate transformation is a $k$th-order approximate symmetry of a $k$th-order approximate system if it leaves the system invariant up to $O(\epsilon^{k})$-terms. For an autonomous system described exactly by a vector field $V$, a necessary and sufficient condition for a vector field $U$ to be an exact symmetry is that $U$ and $V$ commute. In the approximate case, the symmetry criterion applies accordingly and is given below as a definition.

\begin{dfn}\label{symmetry}
An approximate symmetry of an approximate system $\dot{z}=V(z;\epsilon)$ on a manifold $M$ is a one-parameter approximate transformation group generated by a vector field $U$ on $M$ that satisfies $\left[U,V\right]\approx0$.
\end{dfn}

For approximate Hamiltonian systems and symmetries, the invariance criterion from the exact case also applies here accordingly and is given by the next definition.

\begin{dfn}\label{Hsymmetry}
An approximate Hamiltonian symmetry of an approximate Hamiltonian system $(\omega,H)$ on a manifold $M$ is a one-parameter approximate transformation group generated by a vector field $U$ on $M$ that satisfies $L_U\omega\approx0$ and $L_UH\approx0$.
\end{dfn}

For multiple future reference, it is worth noting that under $d\omega=0$ and $i_V\omega=-\,dH$, for any vector field $u$ we have the relations
\begin{eqnarray}
\label{Lomega}
L_U\omega&=di_U\omega\\
L_UH&=i_UdH=i_Vi_U\omega.
\label{Lhamiltonian}
\end{eqnarray}

Then an approximate version of Noether's theorem for Hamiltonian systems follows. While the map from constants of motion to Hamiltonian symmetries is automatic, its inverse though, given Definition \ref{Hsymmetry}, stumbles on the exactness of the closed 1-form $i_U\omega$. The next result offers a way out.

\begin{lem}
\label{homology}
Let $V$ be an approximate Hamiltonian system $(M,\omega,H)$ and $U$ an approximate Hamiltonian symmetry generator. If closed trajectories of the set of fields $fU+gV$ with $f,g$ arbitrary functions span $H_1(M)$, then $i_U\omega$ is approximately exact.
\end{lem}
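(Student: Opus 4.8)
The plan is to reduce the statement to a cohomological counting argument, exploiting the fact that a closed $1$-form on $M$ is exact precisely when it integrates to zero over every loop generating $H_1(M)$. First I would record that, by Definition~\ref{Hsymmetry} together with the identity~(\ref{Lomega}), an approximate Hamiltonian symmetry generator $U$ satisfies $di_U\omega\approx0$, so the $1$-form $\sigma:=i_U\omega$ is approximately closed; writing $\sigma\approx\sigma_0+\epsilon\,\sigma_1$, both $\sigma_0$ and $\sigma_1$ are genuinely closed $1$-forms on $M$ (matching powers of $\epsilon$ in $d\sigma\approx0$). By de Rham, $\sigma$ is approximately exact iff $\int_\gamma\sigma_0=0$ and $\int_\gamma\sigma_1=0$ for every loop $\gamma$ whose class spans $H_1(M)$; equivalently $\int_\gamma\sigma\approx0$ to first order in $\epsilon$. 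So it suffices to exhibit, for a spanning set of homology classes, loop representatives along which the period of $\sigma$ vanishes approximately.

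The hypothesis hands us exactly such representatives: closed trajectories of the vector fields $fU+gV$, with $f,g$ arbitrary functions, whose homology classes span $H_1(M)$. So the key computation is to show that $\int_\gamma i_U\omega\approx0$ whenever $\gamma$ is a periodic orbit of $W:=fU+gV$. Parametrising $\gamma$ by its flow, $\int_\gamma i_U\omega=\int_0^T i_W\big(i_U\omega\big)\,dt=\int_0^T \omega(W,U)\,dt$. Now $\omega(fU+gV,U)=f\,\omega(U,U)+g\,\omega(V,U)=-g\,\omega(U,V)$ since $\omega$ is antisymmetric, and by~(\ref{Lhamiltonian}) we have $\omega(U,V)=i_Vi_U\omega=i_UdH=L_UH\approx0$ because $U$ is a Hamiltonian symmetry. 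Hence the integrand is $O(\epsilon^2)$, so $\int_\gamma i_U\omega\approx0$ for each such loop. Since these loops span $H_1(M)$ and periods are linear on homology, $\int_\gamma\sigma\approx0$ for every loop, and therefore $\sigma=i_U\omega$ is approximately exact.

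The main obstacle is bookkeeping the approximate (rather than exact) nature of all objects: one must check that ``approximately closed'' plus ``approximately zero periods'' really does yield an \emph{approximate} primitive, i.e.\ a function $S\approx S_0+\epsilon S_1$ with $dS\approx i_U\omega$. This follows by applying the exact de Rham argument order by order --- $\sigma_0$ has vanishing periods hence $\sigma_0=dS_0$, then $\sigma_1$ has vanishing periods hence $\sigma_1=dS_1$ --- but care is needed because the loops supplied by the hypothesis are themselves only approximately closed orbits of $fU+gV$, so strictly one works with the representative system and notes that perturbing a loop by $O(\epsilon)$ changes the period of a closed $1$-form by $O(\epsilon)$, which is harmless to first order. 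A secondary point to keep clean is that $f,g$ range over \emph{all} functions, so the span condition is genuinely about the union over all such $W$; no single vector field need have enough closed orbits on its own.
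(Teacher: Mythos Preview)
Your proposal is correct and follows essentially the same route as the paper's proof: use $L_U\omega\approx0$ with (\ref{Lomega}) to get $i_U\omega$ approximately closed, then use $L_UH\approx0$ with (\ref{Lhamiltonian}) and antisymmetry to show $i_Xi_U\omega\approx0$ for $X=fU+gV$, hence the periods over the spanning loops vanish approximately. The paper's version is simply terser, omitting your order-by-order de Rham bookkeeping and the remarks on approximate orbits.
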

\begin{proof}
If $U$ is a Hamiltonian symmetry generator, then $i_U\omega$ is closed up to first-order terms from (\ref{Lomega}), because $L_U\omega\approx0$. Also, $i_Xi_U\omega\approx0$ for any $X=fU+gV$ from (\ref{Lhamiltonian}), since $L_UH\approx0$. So, $\int_\gamma i_U\omega\approx0$, where $[\gamma]=X$, hence the result.
\end{proof}

\begin{dfn}
A function $K(z;\epsilon)=K_0(z)+\epsilon K_1(z)$, $z\in M$ is an approximate constant of motion for an approximate dynamical system $V$ on a manifold $M$ if $L_VK\approx0$.
\end{dfn}

\begin{thm}\label{noether}
If a function $K$ is an approximate constant of motion for the approximate Hamiltonian system $(\omega,H)$, then there exists an approximate Hamiltonian symmetry generated by a vector field $U$, unique up to equivalence, such that $i_U\omega\approx-\,dK$. Under the assumption of Lemma \ref{homology}, the converse is also true.
\end{thm}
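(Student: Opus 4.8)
The plan is to split the statement into its two directions and handle each order in $\epsilon$ in turn. For the forward direction, suppose $K = K_0 + \epsilon K_1$ satisfies $L_V K \approx 0$. Define $U$ implicitly by $i_U\omega \approx -\,dK$; this is legitimate because $\omega = \omega_0 + \epsilon\omega_1$ with $\omega_0$ symplectic, hence nondegenerate, so the map $u \mapsto i_u\omega$ is invertible on the zeroth-order part and the first-order correction can be solved for perturbatively. Concretely, writing $U = U_0 + \epsilon U_1$, the order-$\epsilon^0$ equation $i_{U_0}\omega_0 = -\,dK_0$ determines $U_0$ uniquely, and the order-$\epsilon^1$ equation $i_{U_0}\omega_1 + i_{U_1}\omega_0 = -\,dK_1$ then determines $U_1$ uniquely given $U_0$; this yields existence and uniqueness up to $\approx$. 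It remains to check the two symmetry conditions of Definition \ref{Hsymmetry}. For $L_U\omega \approx 0$: by (\ref{Lomega}) and $d\omega \approx 0$ we have $L_U\omega = d\,i_U\omega \approx -\,d(dK) = 0$. For $L_U H \approx 0$: by (\ref{Lhamiltonian}), $L_U H = i_V i_U\omega \approx i_V(-\,dK) = -\,L_V K \approx 0$, using the hypothesis. So $U$ is an approximate Hamiltonian symmetry generator, and $U$ being well-defined as an approximate vector field is precisely the content of "unique up to equivalence."

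For the converse, assume the hypothesis of Lemma \ref{homology}: closed trajectories of the fields $fU + gV$ span $H_1(M)$. Given an approximate Hamiltonian symmetry generator $U$, Lemma \ref{homology} tells us $i_U\omega$ is approximately exact, so there is a function $K = K_0 + \epsilon K_1$ with $i_U\omega \approx -\,dK$. One then reverses the last computation of the forward direction: $L_V K = i_V\,dK \approx -\,i_V i_U\omega = -\,L_U H \approx 0$ by (\ref{Lhamiltonian}) and $L_U H \approx 0$. Hence $K$ is an approximate constant of motion, closing the loop. A small point to state explicitly is that exactness at the level of $\approx$ must be argued order by order — one chooses a primitive $K_0$ of the closed 1-form $i_{U_0}\omega_0$, then a primitive $K_1$ of the closed 1-form $i_{U_0}\omega_1 + i_{U_1}\omega_0$ — but this is immediate once $H_1(M)$ is killed along the relevant cycles, which is exactly what the Lemma's span condition provides at each order.

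The only genuine subtlety, and the step I would be most careful about, is the perturbative inversion of $i_{(\cdot)}\omega \approx -\,dK$ in the forward direction and the matching order-by-order exactness argument in the converse: one must make sure that the first-order pieces are consistent, i.e., that no obstruction appears at order $\epsilon$. This works because $\omega_0$ is symplectic (so $U_1$ is never over-determined) and because $d\omega \approx 0$ forces $d\omega_1 = 0$, so that $d(i_{U_0}\omega_1 + i_{U_1}\omega_0) = L_{U_0}\omega_1 + L_{U_1}\omega_0 = L_U\omega|_{\epsilon\text{-part}} \approx 0$, making the first-order 1-form closed and hence (under the Lemma's hypothesis) exact. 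Everything else is a direct application of the identities (\ref{Lomega})–(\ref{Lhamiltonian}) together with $d^2 = 0$, carried out within the equivalence class $\approx$; no estimates beyond truncation at order $\epsilon$ are needed.
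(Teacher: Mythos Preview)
Your proof is correct and follows essentially the same approach as the paper's: you solve $i_U\omega\approx-\,dK$ order by order using the nondegeneracy of $\omega_0$, then verify the two symmetry conditions via the identities (\ref{Lomega})--(\ref{Lhamiltonian}), and for the converse invoke Lemma~\ref{homology} to obtain the primitive $K$ before reversing the same computation. The extra commentary on order-by-order exactness and possible obstructions is sound elaboration but not needed for the argument, and note that in the paper's setup $d\omega=0$ holds exactly (both $\omega_0$ and $\omega_1$ are closed), not just approximately.
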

\begin{proof}
For any function $K$, a vector field $U=U_0+\epsilon U_1$ such that $i_U\omega\approx-\,dK$ is well-defined, since $\omega_0$ is nondegenerate. This is because the zeroth-order terms $i_{U_0}\omega_0=-dK_0$ determine $U_0$ uniquely and then the first-order terms $i_{U_1}\omega_0+i_{U_0}\omega_1=-dK_1$ determine $U_1$ uniquely. Thus, we have $L_U\omega\approx0$ from (\ref{Lomega}), since $i_U\omega$ is closed up to first-order terms. If $L_VK\approx0$, then $L_UH\approx0$ too, because from (\ref{Lhamiltonian})
\begin{equation}
\label{HtoK}
L_UH=-\,i_Ui_V\omega=-\,i_VdK=-\,L_VK.
\end{equation}
In the other direction, if a vector field $U$ generates an approximate Hamiltonian symmetry, $i_U\omega\approx-\,dK$ for some global function $K$ by Lemma \ref{homology}. Then, using (\ref{HtoK}), $K$ is approximately conserved, because $L_UH\approx0$.
\end{proof}

\begin{rem}\label{varsymm}
Here as well as in \cite{BKM}, we have chosen to use Hamiltonian symmetries. Equally, one can address the same problem in terms of variational symmetries \cite{O}. In other words, assume that $U$ generates an approximate symmetry of the Lagrangian formulation for the system. This means that $U$ leaves $\int{\!L\,dt}$ invariant modulo boundary terms and up to $O(\epsilon)$-terms. Infinitesimally for GC motion it is expressed as $L_U\alpha\approx df$ for some arbitrary function $f(x,v_\pl)$, recalling $\alpha=\tilde{A}^\flat-Hdt$ from section \ref{gcsection}. This condition splits by $t$ into $L_U\tilde{A}^\flat\approx df$ and $L_UH\approx0$. Note that $d\tilde{A}^\flat=\omega$, so applying $d$ to the former gives $L_U\omega\approx0$. Therefore we recover Definition \ref{Hsymmetry}. The opposite direction requires Lemma \ref{homology}. Variational symmetries assume $K=U\cdot\tilde{A}-f$ is global from the beginning, and so Noether's formulation of Theorems \ref{noether} and \ref{prenoether} soon to follow does not require this lemma.
\end{rem}

\section{Noether's theorem for approximate presymplectic systems}
\label{casimirs}

The FGC system does not follow Definition \ref{Hamiltonian}, as $\omega_0=\beta$ is everywhere degenerate and therefore not symplectic. Consequently Theorem \ref{noether} does not apply in this case. Note that nondegeneracy of $\omega_0$ is actually a requirement only in the first direction of the theorem. Thus, if it fails then an approximate conserved quantity may correspond to more than one approximate Hamiltonian symmetry. The implications of this degeneracy for Noether's theorem are illustrated in this section.

A closed 2-form $\omega$ is called presymplectic. Thus, presymplectic forms may be degenerate and of variable rank. The rank of any 2-form $\omega$ is the dimension of the range of the associated linear map $\hat{\omega}$ from tangent vectors to cotangent vectors at each point, given by $\hat{\omega}(X)=i_X\omega$, and $\omega$ is degenerate if and only if the rank is less than the dimension of the manifold. For $\epsilon\neq0$ the guiding-centre form (\ref{symplectic}) in the exact scenario is symplectic except where $\tilde{B}_\parallel=0$. But for $\epsilon=0$ it reduces to $\beta$, which is closed ($\divr B=0$) and its rank is 2 everywhere (as $B=0$ is excluded) on the 4-dimensional manifold $M=Q\times\mathbb{R}$, so it is presymplectic and nowhere symplectic.

In general, the kernel of $\omega$ consists of all the vector fields that annihilate $\omega$ and degeneracy means nonzero $\ker\omega$ of dimension complementary to the range. In the approximate setup, in order to include any degeneracies arising from the equivalence relation $\approx$, we consider

\begin{dfn}
\label{kernel}
For a 2-form $\omega=\omega_0+\epsilon\,\omega_1$, $\ker\omega$ is the set of all approximate vector fields $S$ such that $i_S\omega\approx0$.
\end{dfn}

For a presymplectic form $\omega$, we continue to say a dynamical system $V$ is Hamiltonian if $i_V\omega=-\,dH$ for some $H$.  In contrast to the symplectic case, however, this does not have any solutions $V$ if $dH$ is not in the range of $\omega$, and if $dH$ is in the range then it has an affine space of solutions, consisting of one solution plus anything in its kernel, so $(\omega,H)$ no longer determines $V$ uniquely.  Thus, to specify a presymplectic Hamiltonian system we give $(V,\omega,H)$. We do the same for approximate systems, as well. In the sense of Definition \ref{kernel}, note that nondegeneracy of an approximate 2-form $\omega=\omega_0+\epsilon\,\omega_1$ requires only $\omega_0$ to be nondegenerate. Failing to meet this requirement, the guiding-centre form for $\epsilon\ll1$ can be said to be presymplectic. More generally, we say

\begin{dfn}
An approximate presymplectic Hamiltonian system $(V,\omega,H)$ on a manifold $M$ is an approximate dynamical system $V$ that satisfies $i_V\omega=-\,dH$ for $\omega=\omega_0+\epsilon\,\omega_1$, $H=H_0+\epsilon H_1$, where $\omega_0,\omega_1$ are presymplectic forms and $H_0,H_1$ are functions all on $M$, assuming $\omega_0$ is nowhere symplectic.
\end{dfn}


Next we present some symmetry aspects introduced by presympelctic forms. For approximate forms, the points we limit ourselves to are very similar to the exact case, thus we fine-tune them directly for approximate presymplectic systems. So, first, we adopt again Definition \ref{Hsymmetry} for Hamiltonian symmetries. Note though that the kernel of a presymplectic form gives rise automatically to Hamiltonian symmetries of all systems $V$ with the same presymplectic form $\omega$ regardless of the Hamiltonian function $H$.

\begin{prop}\label{trivial}
For an approximate presymplectic Hamiltonian system $(V,\omega,H)$, any vector field in $\ker\omega$ generates an approximate Hamiltonian symmetry for all $H$.
\end{prop}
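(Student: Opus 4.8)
The plan is to verify directly the two invariance conditions of Definition~\ref{Hsymmetry}. An approximate vector field $S=S_0+\epsilon S_1$ always integrates, by the construction recorded after Definition~\ref{dfn:appgen}, to a one-parameter approximate transformation group $\Phi^\tau\approx\exp(\tau S)$, so the content of the proposition is precisely that, when $S\in\ker\omega$, this group leaves $\omega$ and $H$ approximately invariant, for any $H$. Accordingly I would fix $S$ with $i_S\omega\approx0$ in the sense of Definition~\ref{kernel} (equivalently $i_{S_0}\omega_0=0$ and $i_{S_1}\omega_0+i_{S_0}\omega_1=0$) and compute $L_S\omega$ and $L_SH$.

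For $L_S\omega$, I would use Cartan's formula $L_S\omega=d\,i_S\omega+i_S\,d\omega$. The second term vanishes identically because $\omega_0$ and $\omega_1$ are presymplectic, hence closed, so $d\omega=0$ exactly. The first term is $d$ of a form that is $O(\epsilon^2)$ by hypothesis, and since exterior differentiation does not change the order in $\epsilon$, it is again $O(\epsilon^2)$, i.e. $\approx0$. This is just relation~(\ref{Lomega}) applied to $u=S$, valid whenever $d\omega=0$, and it gives $L_S\omega\approx0$.

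For $L_SH$, I would invoke the Hamiltonian identity $i_V\omega=-\,dH$ together with relation~(\ref{Lhamiltonian}): $L_SH=i_S\,dH=-\,i_S\,i_V\omega=i_V\,i_S\omega$. Since $i_S\omega$ is $O(\epsilon^2)$ and contracting it with the vector field $V$ (which near $\epsilon=0$ is represented by the polynomial $V_0+\epsilon V_1$, hence is bounded on compact sets) preserves that estimate, $L_SH\approx0$. This step uses nothing about $H$ beyond $i_V\omega=-\,dH$, so it holds for every Hamiltonian $H$ --- equivalently, for every $V$ sharing the presymplectic form $\omega$ --- which is the ``for all $H$'' clause.

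The argument is short, and I expect no substantial obstacle; the only point that needs care is the $\epsilon$-bookkeeping, namely that exterior differentiation and contraction by the $\epsilon$-dependent field $V$ both respect the equivalence $\approx$, and that the closedness of $\omega$ is exact, so that no $O(\epsilon)$ remainder contaminates the term $i_S\,d\omega$.
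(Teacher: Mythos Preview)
Your argument is correct and follows essentially the same route as the paper's own proof: apply relation~(\ref{Lomega}) to get $L_S\omega=d\,i_S\omega\approx0$ from closedness of $\omega$, and relation~(\ref{Lhamiltonian}) to get $L_SH=i_V\,i_S\omega\approx0$, noting that contraction with $V$ preserves the order in $\epsilon$. The additional care you take with the $\epsilon$-bookkeeping and the exactness of $d\omega=0$ simply makes explicit what the paper leaves implicit.
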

\begin{proof}
Let $i_S\omega\approx0$ for some vector field $S$. Then $L_S\omega=di_S\omega\approx0$ from (\ref{Lomega}), and $L_SH=i_Vi_S\omega\approx0$ from (\ref{Lhamiltonian}), as $V$ respects the order of approximation of $i_S\omega$.
\end{proof}

Thus, such symmetries are not triggered by the dynamics of a particular system, they merely reduce it to a local symplectic submanifold. Moreover, they trivially satisfy the relation of Theorem \ref{noether} for a constant function. We say
\begin{dfn}
A trivial symmetry of an approximate presymplectic Hamiltonian system $(V,\omega,H)$ is a transformation generated by a vector field in $\ker\omega$.
\end{dfn}

\begin{rem}
Unlike the symplectic case, in the presymplectic case we cannot deduce that a Hamiltonian symmetry $U$ is a symmetry of $V$, only that $[U,V]\in\ker\omega$ and $[U+S,V]=0$ for some $S\in\ker\omega$.
\end{rem}

In order to restore the one-to-one correspondence in Noether's theorem, we need to consider equivalence classes of Hamiltonian symmetries, each differing from one another by a trivial one. This is where an approximate version meets a presymplectic one.

\begin{dfn}
\label{range}
For a 2-form $\omega=\omega_0+\epsilon\,\omega_1$, $\ran\omega$ is the set of all approximate 1-forms $i_X\omega$ for approximate vector fields $X$.
\end{dfn}

\begin{thm}\label{prenoether}
If a function $K$ is an approximate constant of motion for the approximate presymplectic Hamiltonian system $(V,\omega,H)$ with $dK\in\textnormal{ran}\,\omega$, then there exists an approximate Hamiltonian symmetry generated by any vector field $U+S$ such that $i_U\omega\approx-\,dK$ and $S\in\ker\omega$. Under the assumption of Lemma \ref{homology}, the converse is also true.
\end{thm}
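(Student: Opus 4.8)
The plan is to mirror the proof of Theorem~\ref{noether}, splitting every statement into its zeroth- and first-order pieces in $\epsilon$, but compensating for the degeneracy of $\omega_0$ by carrying along arbitrary kernel elements. In the forward direction, suppose $K=K_0+\epsilon K_1$ is approximately conserved with $dK\in\ran\omega$. By Definition~\ref{range} there is an approximate vector field $U=U_0+\epsilon U_1$ with $i_U\omega\approx-\,dK$; unlike the symplectic case, $U$ is only determined modulo $\ker\omega$, which is precisely why the conclusion is phrased for every $U+S$ with $S\in\ker\omega$. Since $i_U\omega\approx-\,dK$ and $d^2K=0$, the $1$-form $i_U\omega$ is closed to first order, so $L_U\omega=d\,i_U\omega\approx0$ by \eqref{Lomega}; adding $S\in\ker\omega$ changes nothing, as $L_S\omega\approx0$ by the computation in Proposition~\ref{trivial}. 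For the Hamiltonian condition, I would reuse the chain \eqref{HtoK}: $L_{U+S}H=L_UH+L_SH$, where $L_SH\approx0$ again by Proposition~\ref{trivial}, and $L_UH=-\,i_U i_V\omega=i_V i_U\omega\approx-\,i_V\,dK=-\,L_VK\approx0$. Hence $U+S$ generates an approximate Hamiltonian symmetry.

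For the converse, assume $U$ generates an approximate Hamiltonian symmetry, so $L_U\omega\approx0$ and $L_UH\approx0$. From \eqref{Lomega}, $i_U\omega$ is closed to first order, and from \eqref{Lhamiltonian}, $i_V i_U\omega=L_UH\approx0$; applying Lemma~\ref{homology} (under its stated spanning hypothesis on closed trajectories of $fU+gV$) gives a global approximate function $K$ with $i_U\omega\approx-\,dK$, and in particular $dK\in\ran\omega$ by Definition~\ref{range}. That $K$ is approximately conserved then follows once more from \eqref{HtoK}: $L_VK=-\,L_UH\approx0$. Finally, to match the statement's ``any vector field $U+S$'', I would note that replacing $U$ by $U+S$ with $S\in\ker\omega$ leaves $i_U\omega$, hence $K$, unchanged up to equivalence, so the same $K$ works for the whole coset.

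The one genuinely new point compared with Theorem~\ref{noether} is bookkeeping the degeneracy: I should check that $dK\in\ran\omega$ is exactly the obstruction to solving $i_U\omega\approx-\,dK$, and that $\ran\omega$ and $\ker\omega$ behave well order-by-order, i.e.\ that solvability at zeroth order ($dK_0\in\ran\omega_0$) plus at first order determines $U$ up to $\ker\omega$ — this is the analogue of the nondegeneracy step in Theorem~\ref{noether} and is where one must be slightly careful that no spurious solvability condition on $K_1$ appears beyond $dK\in\ran\omega$. I expect this order-splitting verification to be the main, though still routine, obstacle; everything else is a direct transcription of the symplectic argument with Proposition~\ref{trivial} absorbing the kernel terms and Lemma~\ref{homology} supplying global exactness in the converse.
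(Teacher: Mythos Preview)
Your proposal is correct and follows essentially the same approach as the paper's own proof: both reduce to the argument of Theorem~\ref{noether}, invoking Proposition~\ref{trivial} to absorb kernel elements and Lemma~\ref{homology} for the converse. Your write-up is in fact more explicit than the paper's (which simply says ``as in Theorem~\ref{noether}'' at the key steps), and your closing remark about the order-by-order solvability bookkeeping is a fair observation, though the paper treats $dK\in\ran\omega$ as sufficient by definition without further comment.
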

\begin{proof}
The proof follows from Proposition \ref{trivial} and along the same lines as the proof of Theorem \ref{noether}. In the first direction, for any function $K$ with $dK\in\ran\omega$, a vector field $\tilde{U}$ such that $i_{\tilde{U}}\omega\approx-\,dK$ can be defined uniquely modulo elements of $\ker\omega$, i.e., $\tilde{U}=U+S$, where $i_U\omega\approx-\,dK$. Then, as in Theorem \ref{noether}, $\tilde{U}$ generates an approximate Hamiltonian symmetry. In the other direction, if $U+S$ generates an approximate Hamiltonian symmetry, then so does $U$ by Proposition \ref{trivial}. Then, as in Theorem \ref{noether}, $i_U\omega\approx-\,dK$ by Lemma \ref{homology} and $K$ is an approximate conserved quantity.
\end{proof}

For more general presymplectic systems, where $V$ is not unique or only exists on a submanifold of $M$, see \cite{FP,LD} for a (purely) presymplectic version of Noether's theorem.

\subsection{The guiding-centre case}
Back to FGC motion,
\begin{prop}
\label{gcrange}
The range of the guiding-centre 2-form $\omega=\beta+\epsilon d(v_\pl b^\flat)$ consists of all the 1-forms $a=a_0+\epsilon a_1$ on $M$ such that $a_0\in\ran\beta$.
\end{prop}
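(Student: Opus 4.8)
The plan is to reduce the statement to the pointwise linear algebra of the two forms $\beta$ and $d(v_\pl b^\flat)$, the only structural input being that $B$ is nowhere zero. First I would record the structure of $\beta$ on $M=Q\times\mathbb{R}$: since $B\neq0$, the closed form $\beta=i_B\Omega$ has constant rank $2$, and its kernel at each point is spanned by $B$ (because $i_Bi_B\Omega=0$) and by the generator $\partial_{v_\pl}$ of the $\mathbb{R}$-factor (because $\beta$ is pulled back from $Q$). Consequently $\ran\beta$ is the rank-$2$ subbundle $\mathrm{Ann}(\ker\beta)=\{\,a: i_Ba=0,\ i_{\partial_{v_\pl}}a=0\,\}$: the inclusion $\ran\beta\subseteq\mathrm{Ann}(\ker\beta)$ is immediate from $i_Si_X\beta=-i_Xi_S\beta$, and equality holds because both are rank-$2$ subbundles of $T^*M$ (using that $\beta$ has constant rank, so every smooth section of $\ran\beta$ is $i_X\beta$ for smooth $X$).

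The forward inclusion is then immediate: if $a\approx i_X\omega$ with $X\approx X_0+\epsilon X_1$, expanding gives $a\approx i_{X_0}\beta+\epsilon\big(i_{X_1}\beta+i_{X_0}d(v_\pl b^\flat)\big)$, so $a_0=i_{X_0}\beta\in\ran\beta$, with no constraint on $a_1$.

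For the converse, take $a=a_0+\epsilon a_1$ with $a_0\in\ran\beta$ and $a_1$ arbitrary. Choose any $X_0$ with $i_{X_0}\beta=a_0$; this exists by the first paragraph and is determined only modulo $\ker\beta$. It remains to solve $i_{X_1}\beta=a_1-i_{X_0}d(v_\pl b^\flat)$ for $X_1$, which is possible exactly when the right-hand side lies in $\ran\beta$. So the proof comes down to showing that the residual freedom $X_0\mapsto X_0+S$, $S\in\ker\beta$, can be used to push $a_1-i_{X_0}d(v_\pl b^\flat)$ into $\ran\beta$; equivalently, that the bundle map $\ker\beta\to T^*M/\ran\beta$, $S\mapsto[\,i_Sd(v_\pl b^\flat)\,]$, is a fibrewise isomorphism. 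This is the key step. Using $d(v_\pl b^\flat)=dv_\pl\wedge b^\flat+v_\pl\,db^\flat$ one finds $i_{\partial_{v_\pl}}d(v_\pl b^\flat)=b^\flat$ and $i_Bd(v_\pl b^\flat)=-|B|\,dv_\pl+v_\pl\,i_Bdb^\flat$, the latter congruent to $-|B|\,dv_\pl$ modulo $\ran\beta$; since a $1$-form's class in $T^*M/\ran\beta$ is read off from its pairings with $B$ and $\partial_{v_\pl}$, these two forms contribute $(|B|,0)$ and $(0,-|B|)$, a matrix of determinant $|B|^2\neq0$. Hence there is a unique $S\in\ker\beta$ with $a_1-i_{X_0+S}d(v_\pl b^\flat)\in\ran\beta$; solving for $X_1$ and setting $X=(X_0+S)+\epsilon X_1$ gives $i_X\omega\approx a$.

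The only delicate point is this isomorphism, and what makes it work is that the degenerate directions of $\beta$ are exactly ``unfolded'' by the first-order term $d(v_\pl b^\flat)$ --- the same mechanism responsible for $\omega$ being nondegenerate for $\epsilon\neq0$ away from $\tilde{B}_\pl=0$, except that the computation here is purely pointwise and needs only $B\neq0$, so the singular set $\tilde{B}_\pl=0$ plays no role. I would also emphasise that this unfolding consumes all the $\ker\beta$-freedom available at order zero, which is precisely why the proposition places no condition whatsoever on $a_1$.
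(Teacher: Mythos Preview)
Your proof is correct and shares the same skeleton as the paper's: expand $i_X\omega$ in $\epsilon$, read off $a_0=i_{X_0}\beta$ for the forward inclusion, and for the converse exploit the $\ker\beta$-freedom in $X_0$ together with $X_1$ to realise an arbitrary $a_1$. The difference is one of packaging. The paper simply writes out
\[
i_{X_1}\beta+(i_{X_0}dv_\pl)\,b^\flat-(i_{X_0}b^\flat)\,dv_\pl+v_\pl\, i_{X_0}db^\flat=a_1
\]
and observes that the first three terms already span $T^*M$ as $X_1$ and the $b$-, $v_\pl$-components of $X_0$ vary freely (these being precisely the components not constrained by $i_{X_0}\beta=a_0$), with the fourth term absorbed afterwards. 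You instead isolate the mechanism as the fibrewise map $\ker\beta\to T^*M/\ran\beta$, $S\mapsto[i_Sd(v_\pl b^\flat)]$, and verify invertibility via a $2\times2$ determinant. Your formulation has the advantage of making explicit why the residual term $v_\pl\,i_{X_0}db^\flat$ is harmless --- for $S\in\ker\beta$ one has $i_Sdb^\flat\in\ran\beta$, so it vanishes in the quotient --- a point the paper leaves to the reader; the paper's version is shorter and does not need to first characterise $\ran\beta$ as $\mathrm{Ann}(\ker\beta)$. One trivial slip: your matrix has determinant $-|B|^2$, not $|B|^2$, though of course this does not affect the conclusion.
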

\begin{proof}
Let $a=a_0+\epsilon a_1$ be a 1-form such that $i_X\omega\approx a$ for some vector field $X=X_0+\epsilon X_1$. Then $i_{X_0}\beta=a_0$ and $i_{X_1}\beta+i_{X_0}d(v_\pl b^\flat)=a_1$. Using $d(v_\pl b^\flat)=dv_\pl\wedge b^\flat+v_\pl db^\flat$, the second equation gives
\begin{equation}
\label{ran2nd}
i_{X_1}\beta+(i_{X_0}dv_\pl)b^\flat-(i_{X_0}b^\flat)dv_\pl+v_\pl i_{X_0}db^\flat=a_1.
\end{equation}
The first three terms show that $a_1$ is any 1-form for arbitrary $X_1$ and $v_\pl$-, $b$-components of $X_0$. The latter do not enter the first condition, hence the result.
\end{proof}

\begin{prop}
\label{gckernel}
The kernel of the guiding-centre 2-form $\omega=\beta+\epsilon d(v_\pl b^\flat)$ consists of all the vector fields $S=\epsilon S_1$ on $M$ such that $S_1\in\ker\beta$.
\end{prop}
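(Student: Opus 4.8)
The plan is to resolve the kernel condition $i_S\omega\approx0$ order by order in $\epsilon$. Writing $S=S_0+\epsilon S_1$ and recalling $\omega=\beta+\epsilon\,d(v_\pl b^\flat)$, the relation $i_S\omega\approx0$ splits into $i_{S_0}\beta=0$ at zeroth order and $i_{S_1}\beta+i_{S_0}d(v_\pl b^\flat)=0$ at first order. The first equation says $S_0\in\ker\beta$, and since $\beta=i_B\Omega$ on $M=Q\times\mathbb{R}$ has rank $2$ with kernel spanned by the (horizontally lifted) field $B$ and the vertical field $\partial_{v_\pl}$, this means $S_0=fB+g\,\partial_{v_\pl}$ for some functions $f,g$ on $M$.

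The crux is to show that the first-order equation forces $S_0=0$. I would use that $i_{S_1}\beta\in\ran\beta$ and that, by the zeroth-order part of Proposition \ref{gcrange}, $\ran\beta$ is precisely the set of $1$-forms annihilating $\ker\beta$, i.e.\ those $a$ with $i_Ba=0$ and $i_{\partial_{v_\pl}}a=0$. Hence $i_{S_0}d(v_\pl b^\flat)$ must itself lie in $\ran\beta$. Expanding $d(v_\pl b^\flat)=dv_\pl\wedge b^\flat+v_\pl\,db^\flat$ and contracting with $S_0$, using $b^\flat(B)=|B|$, $dv_\pl(\partial_{v_\pl})=1$, and that $b^\flat$ and $db^\flat$ are pulled back from $Q$, gives
\[
i_{S_0}d(v_\pl b^\flat)=g\,b^\flat-f|B|\,dv_\pl+v_\pl f\,i_Bdb^\flat.
\]
Pairing with $\partial_{v_\pl}$ picks out $-f|B|$, which must vanish, and pairing with $B$ picks out $g|B|$, which must vanish; since $|B|\neq0$ everywhere, $f=g=0$, so $S_0=0$.

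With $S_0=0$ the first-order equation reduces to $i_{S_1}\beta=0$, i.e.\ $S_1\in\ker\beta$, and $S=\epsilon S_1$ as claimed. The converse direction is immediate: if $S=\epsilon S_1$ with $S_1\in\ker\beta$ then $i_S\omega=\epsilon\,i_{S_1}\beta+\epsilon^2 i_{S_1}d(v_\pl b^\flat)=O(\epsilon^2)\approx0$. The only points demanding a little care are the identification of $\ran\beta$ with the annihilator of $\ker\beta$ and keeping track of which objects are horizontal (pulled back from $Q$) versus vertical on $M$; the remaining computation is a one-line contraction.
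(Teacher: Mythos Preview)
Your proof is correct and follows essentially the same route as the paper's: split $i_S\omega\approx0$ into orders, expand $d(v_\pl b^\flat)=dv_\pl\wedge b^\flat+v_\pl\,db^\flat$, and contract the first-order equation with $\partial_{v_\pl}$ and with $b$ (equivalently $B$) to kill $S_0$. The only cosmetic difference is that you first parametrise $S_0=fB+g\,\partial_{v_\pl}$ and use the range/annihilator observation, which makes the $db^\flat$ term drop out automatically by antisymmetry and horizontality; the paper instead argues intrinsically, using $\beta\wedge b^\flat=|B|\Omega$ to obtain $i_{S_0}\Omega=0$ and hence $i_{S_0}db^\flat=0$ before contracting with $b$.
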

\begin{proof}
Let $S=S_0+\epsilon S_1$ be a vector field such that $i_S\omega\approx0$. Then $i_{S_0}\beta=0$ and $i_{S_1}\beta+i_{S_0}d(v_\pl b^\flat)=0$. Now, the second equation gives
\begin{equation}
\label{ker2nd}
i_{S_1}\beta+(i_{S_0}dv_\pl)b^\flat-(i_{S_0}b^\flat)dv_\pl+v_\pl i_{S_0}db^\flat=0.
\end{equation}
Contracting the above with $\partial_{v_\pl}$, we have $i_{S_0}b^\flat=0$. The latter together with the first equation yields $i_{S_0}\Omega=0$, applying $i_{S_0}$ on $\beta\wedge b^\flat=|B|\Omega$. Then the last term in (\ref{ker2nd}) also vanishes, because $i_{S_0}db^\flat=i_{S_0}i_c\Omega=-\,i_ci_{S_0}\Omega=0$, where $c=\curl b$. So, if we contract (\ref{ker2nd}) with $b$, we get $i_{S_0}dv_\pl=0$. Thus, $S_0=0$ because $i_{S_0}\Omega=0$ and $i_{S_0}dv_\pl=0$. Then (\ref{ker2nd}) reduces to just $i_{S_1}\beta=0$.
\end{proof}

\begin{cor}
\label{gcnoether}
For FGC motion, there is a one-to-one correspondence between approximate constants of motion $K$ with $K_0=\textnormal{const.}$ being flux surfaces and classes of approximate Hamiltonian symmetries $U+\epsilon(fb,g)$ where $f,g$ are any functions, and $i_U\omega\approx-\,dK$.
\end{cor}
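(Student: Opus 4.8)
The plan is to assemble the corollary directly from Theorem \ref{prenoether} together with the explicit descriptions of the range and kernel of the guiding-centre 2-form obtained in Propositions \ref{gcrange} and \ref{gckernel}. Theorem \ref{prenoether} already furnishes a correspondence between approximate constants of motion $K$ with $dK\in\ran\omega$ and classes of approximate Hamiltonian symmetries modulo $\ker\omega$; what remains is to rewrite the two side conditions ``$dK\in\ran\omega$'' and ``modulo $\ker\omega$'' in the concrete terms stated in the corollary, and to check that the assumption of Lemma \ref{homology} is automatically met for FGC motion so that both directions of the correspondence hold.

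First I would unpack the condition $dK\in\ran\omega$. Writing $K=K_0+\epsilon K_1$, Proposition \ref{gcrange} says this is equivalent to $dK_0\in\ran\beta$, with no constraint on $K_1$. Since $\beta=i_B\Omega$, a 1-form lies in $\ran\beta$ at a point precisely when it annihilates $B$ there (equivalently, $dK_0\wedge\beta$ has the right degeneracy); concretely $dK_0\in\ran\beta$ iff $i_B\,dK_0=0$, i.e. $L_BK_0=0$, which says exactly that $K_0$ is constant along $B$, i.e. its level sets are (unions of) magnetic flux surfaces. This identifies the class of admissible zeroth-order invariants with functions whose level sets are flux surfaces, as claimed. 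I should remark that $K_1$ is unconstrained, consistent with Proposition \ref{gcrange}.

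Next I would identify the equivalence classes of symmetries. By Proposition \ref{gckernel}, $\ker\omega$ consists exactly of the vector fields $\epsilon S_1$ with $S_1\in\ker\beta$; and $\ker\beta$ at each point is the span of $B$ in the $Q$-directions together with the entire $\partial_{v_\pl}$ direction (since $\beta$ is pulled back from $Q$ and has rank $2$ there). Hence a general element of $\ker\omega$ has the form $\epsilon(fb,g)$ for arbitrary functions $f,g$ on $M$ — here I use $b=B/|B|$, so $fb$ ranges over all multiples of $B$. Therefore, by Theorem \ref{prenoether}, to each admissible $K$ there corresponds the class of Hamiltonian symmetry generators $U+\epsilon(fb,g)$ with $i_U\omega\approx-\,dK$, $f,g$ arbitrary, which is precisely the statement of the corollary. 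For the converse direction I would invoke Lemma \ref{homology}: one must check that closed trajectories of the fields $fU+gV$ span $H_1(M)=H_1(Q\times\mathbb{R})\cong H_1(Q)$; since the kernel field $\epsilon\,\partial_{v_\pl}$ together with the freedom in $f,g$ already lets one realise loops in the $Q$-factor, the hypothesis of Lemma \ref{homology} holds here, so $i_U\omega\approx-\,dK$ is approximately exact and $K$ is genuinely recovered.

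The main obstacle, and the only point requiring real care rather than bookkeeping, is the topological step: verifying that the hypothesis of Lemma \ref{homology} is satisfied for FGC motion, so that the backward implication (from symmetry class to conserved quantity) is valid and the correspondence is genuinely one-to-one rather than merely one-directional. Everything else is a translation of Propositions \ref{gcrange} and \ref{gckernel} into the language of flux surfaces and of vector fields of the form $(fb,g)$, which is routine once those propositions are in hand.
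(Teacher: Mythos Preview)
Your proposal is correct and follows exactly the paper's approach: invoke Theorem \ref{prenoether}, then use Propositions \ref{gcrange} and \ref{gckernel} to translate $dK\in\ran\omega$ into the flux-surface condition $i_B\,dK_0=0$ and $\ker\omega$ into the form $\epsilon(fb,g)$. The one place you go beyond the paper is your attempt to verify the hypothesis of Lemma \ref{homology}; the paper's proof does not address this at all (it simply cites Theorem \ref{prenoether}), and your sketch for it is not convincing as written---the kernel field $\epsilon\,\partial_{v_\pl}$ is not generally of the form $fU+gV$ required by the lemma---so it would be more faithful to leave this as an implicit standing assumption, as the paper does.
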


\begin{proof}
The range of $\beta$ on $M$ consists of all the 1-forms on $Q$ that vanish on $B$, and so for exact 1-forms $dK_0$ this means $i_BdK_0=0$, i.e., $K_0=\textnormal{const.}$ is a flux surface. The kernel of $\beta$ on $M$ consists of all the vector fields $(fb,g)$, where $f,g$ are arbitrary functions. The result follows from Theorem \ref{prenoether} and Propositions \ref{gcrange}, \ref{gckernel}.
\end{proof}

\begin{rem}
Note that for all values of $\mu$ the vector field $V_0=(v_\pl b,-\mu\,b\cdot\nabla|B|)$ spans $\ker\beta$, assuming $b\cdot\nabla|B|\neq0$. Then, $i_{V_0}dK_0=0$ says that $dK_0$ belongs to $\ran\omega$ automatically. In other words, instead of asking $K_0=\textnormal{const.}$ to be a flux surface in Corollary \ref{gcnoether} we can ask $K_0$ to be independent of $\mu$ when $b\cdot\nabla|B|\neq0$.
\end{rem}




\section{Approximate quasisymmetry}

In this section, we address approximate Hamiltonian spatial symmetries for guiding-centre motion. Our goal is to see how quasisymmetry can be approximated using the guiding-centre approximation. Either in the exact or the approximate framework,


\begin{dfn}
Quasisymmetry is a Hamiltonian symmetry on $Q$ of FGC motion for all values of the magnetic moment.
\end{dfn}

\begin{thm}\label{appqs}
Given a magnetic field $B$, a vector field $u=u_0+\epsilon u_1$ on $Q$ generates an approximate quasisymmetry if and only if $L_{u_0}\beta=0$, $L_{u_0}b^\flat=0$, $L_{u_0}|B|=0$, $L_{u_1}\beta=0$.
\end{thm}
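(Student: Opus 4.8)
The plan is to work directly from Definition~\ref{Hsymmetry}: a spatial $u$ generates an approximate quasisymmetry precisely when its trivial lift $U=u_0+\epsilon u_1$ to $M=Q\times\mathbb{R}$, which has vanishing $v_\pl$-component so that $L_U v_\pl=0$, satisfies $L_U\omega\approx0$ and $L_U H\approx0$ for all values of $\mu$. So I would substitute the explicit forms (\ref{symplectic}) and (\ref{hamiltonian}), and split each of the two equations into its orders in $\epsilon$, using only Leibniz, $[L_U,d]=0$, and the fact that $\beta$, $b^\flat$ and $|B|$ are pulled back from $Q$ (hence carry no $dv_\pl$).

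Take the Hamiltonian first. Since $H\approx\epsilon(v_\pl^2/2+\mu|B|)$ is already $O(\epsilon)$ and $L_U v_\pl=0$, one gets $L_U H\approx\epsilon\mu\,L_{u_0}|B|$. Demanding this be $\approx0$ for all $\mu$ is equivalent to $L_{u_0}|B|=0$, which is the sole condition in which the magnetic moment plays any role; conversely that condition makes $L_U H$ of order $\epsilon^2$, hence $\approx0$.

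Now the symplectic form. Using $L_U v_\pl=0$, Leibniz and $[L_U,d]=0$,
\[ L_U\omega = L_U\beta + \epsilon\, d\big(v_\pl\, L_U b^\flat\big). \]
Expanding $L_U\beta=L_{u_0}\beta+\epsilon L_{u_1}\beta$ and $L_U b^\flat=L_{u_0}b^\flat+\epsilon L_{u_1}b^\flat$ and discarding $O(\epsilon^2)$ terms gives
\[ L_U\omega \approx L_{u_0}\beta + \epsilon\Big(L_{u_1}\beta + dv_\pl\wedge L_{u_0}b^\flat + v_\pl\, d\big(L_{u_0}b^\flat\big)\Big). \]
Vanishing at zeroth order is $L_{u_0}\beta=0$. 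At first order, on the product $M=Q\times\mathbb{R}$ the term $dv_\pl\wedge L_{u_0}b^\flat$ is the only one carrying a $dv_\pl$ factor, so it must vanish by itself, forcing $L_{u_0}b^\flat=0$; this kills $d(L_{u_0}b^\flat)$ as well, and the first-order equation collapses to $L_{u_1}\beta=0$. For the converse one just feeds the four conditions back into the two displayed expansions and checks that $L_U\omega$ and $L_U H$ are then $O(\epsilon^2)$.

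The computation itself presents no real obstacle; the one step that must be handled with care — and the one that gives the statement content — is the splitting of the first-order $\omega$-equation into its $dv_\pl$-part and its $Q$-part. This is legitimate exactly because $u$ is purely spatial, so $L_U$ generates no $\partial_{v_\pl}$ contributions, and it is precisely this that forces the leading-order conditions to reproduce those of exact quasisymmetry in \cite{BKM}; relaxing spatiality is what will produce the weaker conditions later in the paper.
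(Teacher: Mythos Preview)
Your proof is correct and follows essentially the same route as the paper's: expand $L_U H$ and $L_U\omega$ in $\epsilon$, read off $L_{u_0}|B|=0$ from the Hamiltonian and $L_{u_0}\beta=0$ from zeroth order, then isolate $L_{u_0}b^\flat=0$ from the first-order equation by separating the $dv_\pl$-part (the paper does the equivalent thing by contracting with $\partial_{v_\pl}$), which leaves $L_{u_1}\beta=0$. The only additional remark worth making explicit is that $u_0,u_1$ being vector fields on $Q$ means they are $v_\pl$-independent, which is what guarantees that $L_{u_1}\beta$ and $v_\pl\,d(L_{u_0}b^\flat)$ carry no $dv_\pl$; you use this implicitly in the splitting step.
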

\begin{proof}
Substitute $u$, $\omega$ and $H$ into the conditions of Definition \ref{Hsymmetry} and split up by different powers of $\epsilon$, dropping any second-order terms. Starting with $L_uH\approx0$, we get $L_{u_0}|B|=0$. Similarly from $L_u\omega\approx0$, we have $L_{u_0}\beta=0$ and
\begin{equation}
\label{p1}
L_{u_0}d(v_\pl b^\flat)+L_{u_1}\beta=0
\end{equation}
from the zero- and first-order terms, respectively. Now
\begin{equation}
\label{p2}
L_{u_0}d(v_\pl b^\flat)=dL_{u_0}(v_\pl b^\flat)=d(v_\pl L_{u_0}b^\flat)=dv_\pl\wedge L_{u_0}b^\flat+v_\pl dL_{u_0}b^\flat.
\end{equation}
Thus, contracting (\ref{p1}) with $\partial_{v_\pl}$, we get $L_{u_0}b^\flat=0$. Substituting this into (\ref{p2}) gives $L_{u_0}d(v_\pl b^\flat)=0$, and so (\ref{p1}) yields $L_{u_1}\beta=0$. Going in the opposite direction, it is straightforward to see that the converse is also true.
\end{proof}

As shown in \cite{BKM}, under the above conditions $u_0$ satisfies several additional properties such as $\divr u_0=0$, $[u_0,B]=0$, $[u_0,J]=0$, $L_{u_0}(u_0\cdot b)=0$ and others. Note that $u_0$ and $u_1$ are uncoupled.

\section{Approximate $v_\pl$-symmetries}
\label{sc:appvs}

Subsequently we ask how departures from quasisymmetry that depend on parallel velocity can relax the conditions on $B$ for FGC motion to have a symmetry. Thus, we investigate the conditions for an approximate Hamiltonian symmetry on phase space $M$, which will often be referred to simply as approximate symmetries. As it turns out (Theorem \ref{appvs}), symmetry generators for all values of the magnetic moment have zero component in the parallel-velocity direction. Thus, we will also refer to symmetries on $M$ as $v_\pl$-symmetries, which is short for parallel-velocity-dependent symmetries generated on $Q$.

Symmetries that involve velocities are not new to charged particle motion. Gyrosymmetry is an example of an approximate Hamiltonian symmetry involving the perpendicular velocity to leading order.

\begin{expl}\normalfont
Consider the full particle's motion on the cotangent bundle $T^*Q$ with symplectic form $\omega=\beta+dv\wedge dx$, where $v$ is the particle's velocity. In the case of a homogeneous magnetic field, the magnetic moment $\mu=v_\perp^2/2|B|$ is an exact constant of motion and corresponds via $i_U\omega=-\,d\mu$ to the exact symmetry generated by $U=(v_\perp/|B|,v_\perp\!\times b)$ on $T^*Q$, where $v_\perp$ is the perpendicular velocity vector of the particle.

For a weakly-inhomogeneous $B$, we have $\omega=\beta+\epsilon\,dv\wedge dx$ 
for $\epsilon\ll1$. 
The magnetic moment now extends to an adiabatic invariant $K=\epsilon^2 v_\perp^2/2|B|+O(\epsilon^3)$, which under $i_U\omega=-\,dK$ corresponds to an approximate Hamiltonian symmetry that extends to all orders, generated by the vector field $U=U_0+\epsilon U_1+O(\epsilon^2)$ with
\begin{eqnarray}
\label{gs1}
\fl U_0&=(0,v_\perp\!\times b),\\
\label{gs2}
\fl U_1&=|B|^{-1}\!\left(v_\perp,\left\{\!\left(v_\pl\,b\cdot c-2K_1\,n\cdot\nabla|B|\right)\!b-v_\pl c+\left[b\times(v_\perp\cdot\nabla b)+n\cdot\nabla b\right]\!/2\right\}\!\times v\right)\!,
\end{eqnarray}
where $n=b\times v_\perp$ and $c=\curl b$, as shown in \ref{GSappendix}. Thus, the exact symmetry from the homogeneous case splits between terms of different order. Note that the leading order of $U$ is less by two than $K$'s, same as with $V$ and $H$. 
Formulas (\ref{gs1})-(\ref{gs2}) recover the one in \cite{JS}, where a coordinate-free way is presented to build the so called roto-rate as a means to gyrosymmetry and the corresponding adiabatic invariant to all orders for nearly-periodic systems.
\end{expl}

For considerations of general symmetries, we need to work on $M=Q\times\mathbb{R}$ with volume form $\Omega\wedge dv_\pl$. For any vector field $U$ on $M$, we denote by $u$ the projection of $U$ on $Q$, i.e., the spatial components of $U$ collectively, and by $w$ the component of $U$ in the parallel-velocity direction; we write $U=(u,w)$.


\begin{thm}
\label{appvs}
Given a magnetic field $B$, a vector field $U=(u,w)=(u_0+\epsilon u_1,w_0+\epsilon w_1)$ on $M$ generates an approximate Hamiltonian symmetry of FGC motion if and only if
\begin{eqnarray}
\label{vs1}L_{u_0}\beta=0,\\
\label{vs2}d(v_\pl L_{u_0}b^\flat)+L_{u_1}\beta=0,\\
\label{vs3}L_{u_0}|B|=0,\\
w=0.
\end{eqnarray}
\end{thm}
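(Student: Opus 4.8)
The plan is to substitute $U=(u,w)$ with $u=u_0+\epsilon u_1$, $w=w_0+\epsilon w_1$ into the two conditions of Definition \ref{Hsymmetry}, namely $L_U\omega\approx0$ and $L_UH\approx0$, using the guiding-centre data $\omega=\beta+\epsilon\,d(v_\pl b^\flat)$ and $H=\epsilon(v_\pl^2/2+\mu|B|)$, and then collect terms order by order in $\epsilon$, discarding $O(\epsilon^2)$. The essential new feature compared with Theorem \ref{appqs} is that $U$ now has a $\partial_{v_\pl}$-component $w$, so the Lie derivatives no longer commute as cleanly with pullbacks from $Q$; one must keep the full expressions $L_U(v_\pl b^\flat)$ and $L_U|B|$, in which $w$ appears through $L_U v_\pl = w$.

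First I would treat $L_UH\approx0$. Since $H=\epsilon(v_\pl^2/2+\mu|B|)$ starts at first order, this reads $L_{u_0}|B|\cdot\mu+v_\pl w_0=0$ at order $\epsilon$. Because quasisymmetry demands this for \emph{all} values of $\mu$, the $\mu$-coefficient and the $\mu$-independent part must vanish separately, giving $L_{u_0}|B|=0$ (equation (\ref{vs3})) and $v_\pl w_0=0$, hence $w_0=0$. This is the step where the ``for all $\mu$'' hypothesis does the real work and immediately kills the leading parallel-velocity component. Next I would expand $L_U\omega\approx0$. At order $\epsilon^0$ this is simply $L_U\beta|_{\epsilon^0}=L_{u_0}\beta=0$ (equation (\ref{vs1})), using that $\beta$ is pulled back from $Q$ and $w_0=0$ so the $\partial_{v_\pl}$-part contributes nothing; indeed $L_{(u_0,0)}\beta = L_{u_0}\beta$. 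At order $\epsilon^1$ we get $L_{u_1}\beta + L_{u_0}d(v_\pl b^\flat) + (\text{contribution of }w_0)=0$; since $w_0=0$ the $w$-contribution drops, and $L_{u_0}d(v_\pl b^\flat)=d\,L_{u_0}(v_\pl b^\flat)=d(v_\pl L_{u_0}b^\flat)$ because $L_{u_0}v_\pl=0$ for $u_0$ spatial. This yields exactly (\ref{vs2}). Finally, to pin down $w_1=0$, I would revisit the first-order part of $L_UH\approx0$ more carefully: with $H_1=v_\pl^2/2+\mu|B|$ the first-order piece of $L_UH$ is $\epsilon(L_{u_0}H_1 + w_0\,\partial_{v_\pl}H_0)$, and since $H_0=0$ the term $w_0\partial_{v_\pl}H_0$ vanishes identically, so $w_1$ does not enter $L_UH$ at the retained order — meaning $L_UH\approx0$ imposes no condition on $w_1$. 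The vanishing of $w_1$ must instead come from $L_U\omega\approx0$: writing $U=(u,w)$ and expanding $L_U(dv_\pl\wedge b^\flat + v_\pl\,db^\flat)$ at order $\epsilon^0$, the $\partial_{v_\pl}$-direction of this equation produces a term $d w_0\wedge b^\flat$-type contribution from $\omega_0=\beta$ paired against $w$; tracking the component along $dv_\pl$ forces $w_1=0$ once $w_0=0$ and (\ref{vs1}) hold. Concretely, contracting the order-$\epsilon$ part of $L_U\omega\approx0$ with $\partial_{v_\pl}$ and using $i_{\partial_{v_\pl}}\beta=0$ isolates $w_1$.

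The converse direction is a direct verification: assuming (\ref{vs1})--(\ref{vs3}) and $w=0$, one checks $L_{(u,0)}\omega\approx0$ and $L_{(u,0)}H\approx0$ by reading the same order-by-order identities backwards, exactly as in the last line of the proof of Theorem \ref{appqs}.

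The main obstacle I anticipate is the bookkeeping needed to extract $w_1=0$, since, as noted above, the Hamiltonian condition is silent about $w_1$ at the retained order and one must squeeze it out of $L_U\omega\approx0$ by contracting with $\partial_{v_\pl}$ and exploiting $i_{\partial_{v_\pl}}\beta=0$ together with $L_{u_0}b^\flat$ being a 1-form on $Q$ (so it has no $dv_\pl$ component unless $w_1\neq0$ forces one). Everything else is the same order-splitting mechanism already used for Theorem \ref{appqs}, plus the ``all $\mu$'' argument to separate (\ref{vs3}) from $w_0=0$.
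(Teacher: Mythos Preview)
Your treatment of $L_UH\approx0$ (yielding $w_0=0$ and (\ref{vs3}) via the ``all $\mu$'' argument) and of $L_U\omega\approx0$ at orders $\epsilon^0$ and $\epsilon^1$ (yielding (\ref{vs1}) and (\ref{vs2})) matches the paper's proof. The gap is precisely where you anticipated it: the extraction of $w_1=0$. Your proposed mechanism fails because $w_1$ simply does not appear in either symmetry condition at the retained order. Indeed, $\beta$ is pulled back from $Q$, so $i_{(u_1,w_1)}\beta=i_{u_1}\beta$ and hence $L_{(u_1,w_1)}\beta=di_{u_1}\beta$ is independent of $w_1$; likewise $L_{(u_0,0)}d(v_\pl b^\flat)$ does not see $w_1$. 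Contracting the order-$\epsilon$ equation with $\partial_{v_\pl}$, as you suggest, therefore cannot isolate $w_1$---there is nothing to isolate.

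The paper's resolution is structural rather than computational: it invokes Corollary~\ref{gcnoether} (equivalently Proposition~\ref{gckernel}), which identifies $\ker\omega$ as the set of vector fields $\epsilon S_1$ with $S_1\in\ker\beta=\{(fb,g)\}$. In particular $(0,\epsilon w_1)\in\ker\omega$ for any $w_1$, so $\epsilon w_1\partial_{v_\pl}$ generates a \emph{trivial} symmetry. The statement $w=0$ is then to be read modulo trivial symmetries: $w_1$ is unconstrained by the equations but can be set to zero within the equivalence class. You need this presymplectic ingredient; no amount of contracting will produce $w_1=0$ from the conditions themselves.
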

\begin{proof}
From $L_UH\approx0$, we have $v_\pl w_0+\mu L_{u_0}|B|=0$. Then for all values of $\mu$, we get (\ref{vs3}) and $w_0=0$. Given Corollary \ref{gcnoether}, take also $w_1=0$ under the equivalence by trivial symmetries. Thus, $U$ has overall zero velocity-component $w$.

Since $w=0$, $L_U\omega\approx0$ reduces to $L_u\omega\approx0$. From the latter, we obtain (\ref{vs1}) from the zeroth-order terms, and $L_{u_0}d(v_\pl b^\flat)+L_{u_1}\beta=0$ from the first-order ones. This in turn gives (\ref{vs2}), because $L_{u_0}d(v_\pl b^\flat)=d(v_\pl L_{u_0}b^\flat)$.
\end{proof}

Next we explore some further consequences and also subcases.

\begin{thm}
\label{appvscons}
If $u=u_0+\epsilon u_1$ generates an approximate Hamiltonian $v_\pl$-symmetry of FGC motion, then $\divr u_0=0$, $[u_0,B]=0$ and $i_bL_{u_0}b^\flat=0$. Furthermore,\vspace{-0.1cm}
\begin{enumerate}
	\item If $u_1=0$, then $[u_0,c]=0$ and $i_cL_{u_0}b^\flat=0$, where $c=\curl b$.
	\item If $u_0$ is spatial, then $L_{u_0}b^\flat=i_Bi_{\partial_{v_\pl}\!u_1}\Omega$.
\end{enumerate}
\end{thm}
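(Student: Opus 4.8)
The plan is to extract everything from the three defining relations (\ref{vs1})--(\ref{vs3}) of Theorem \ref{appvs}, together with the algebraic identity $\beta\wedge b^\flat=|B|\Omega$ on $Q$ and the definition $c=\curl b$, i.e. $db^\flat=i_c\Omega$. First I would treat the zeroth-order statements. Since $L_{u_0}\beta=0$ and $\beta=i_B\Omega$, applying the Lie derivative to $i_B\Omega=\beta$ and using $L_{u_0}\Omega=(\divr u_0)\Omega$ gives $i_{[u_0,B]}\Omega+(\divr u_0)\,i_B\Omega=0$; pairing this with $b^\flat$ (equivalently, using $\beta\wedge b^\flat=|B|\Omega$ and $L_{u_0}|B|=0$ from (\ref{vs3})) forces $\divr u_0=0$, and then $i_{[u_0,B]}\Omega=0$, hence $[u_0,B]=0$ since $\Omega$ is a volume form. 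For $i_bL_{u_0}b^\flat=0$: contract (\ref{vs2}) with $\partial_{v_\pl}$ to recover $L_{u_0}b^\flat$-type terms — more directly, differentiate $b^\flat\wedge\beta=|B|\Omega$ along $u_0$ to get $(L_{u_0}b^\flat)\wedge\beta=0$ (using $L_{u_0}\beta=0$, $L_{u_0}|B|=0$, $\divr u_0=0$); since $\beta=i_B\Omega$ has one-dimensional kernel spanned by $B$, the 1-form $L_{u_0}b^\flat$ must be proportional to $b^\flat$, and then $i_bL_{u_0}b^\flat=0$ follows from $i_bb^\flat=1$ together with $L_{u_0}(i_bb^\flat)=0$ — i.e. $0=L_{u_0}1=(i_{[u_0,b]}b^\flat)+i_bL_{u_0}b^\flat$, so one needs to control $i_{[u_0,b]}b^\flat$; alternatively one argues $L_{u_0}b^\flat=\lambda b^\flat$ and $i_b(\lambda b^\flat)=\lambda$, while from $L_{u_0}(b^\flat\wedge\Omega)=0$-type bookkeeping $\lambda=0$. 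I would pick whichever of these two routes is cleanest once the identities are written out.

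For part (i), set $u_1=0$ in (\ref{vs2}), so $d(v_\pl L_{u_0}b^\flat)=0$ identically in $v_\pl$, which since $v_\pl$ is a coordinate on $M$ independent of $Q$ forces $L_{u_0}b^\flat$ to be a closed 1-form on $Q$; but we already know $L_{u_0}b^\flat=\lambda b^\flat$ from the zeroth-order analysis, and $d(\lambda b^\flat)=0$ gives constraints. Actually the cleaner deduction: $d(v_\pl L_{u_0}b^\flat)=dv_\pl\wedge L_{u_0}b^\flat+v_\pl\,dL_{u_0}b^\flat=0$, and splitting off the $dv_\pl$ part gives $L_{u_0}b^\flat=0$ outright (the $dv_\pl\wedge(\cdot)$ term cannot be cancelled by the $v_\pl$-independent piece). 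With $L_{u_0}b^\flat=0$ in hand, $i_cL_{u_0}b^\flat=0$ is immediate, and $[u_0,c]=0$ follows by Lie-differentiating $db^\flat=i_c\Omega$ along $u_0$: the left side is $dL_{u_0}b^\flat=0$, the right side is $i_{[u_0,c]}\Omega+(\divr u_0)i_c\Omega=i_{[u_0,c]}\Omega$ using $\divr u_0=0$, so $[u_0,c]=0$.

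For part (ii), assume $u_0$ is spatial, meaning its $\partial_{v_\pl}$-component vanishes, so $u_0$ descends to a $v_\pl$-independent vector field on $Q$ and $L_{u_0}$ commutes with $\partial_{v_\pl}$. Contract (\ref{vs2}) with $\partial_{v_\pl}$: since $L_{u_0}b^\flat$ is $v_\pl$-independent, $i_{\partial_{v_\pl}}d(v_\pl L_{u_0}b^\flat)=L_{u_0}b^\flat+v_\pl\,i_{\partial_{v_\pl}}dL_{u_0}b^\flat=L_{u_0}b^\flat$ (the second term vanishes as $dL_{u_0}b^\flat$ is a 2-form pulled back from $Q$), so (\ref{vs2}) yields $L_{u_0}b^\flat=-\,i_{\partial_{v_\pl}}L_{u_1}\beta$. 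Now compute $i_{\partial_{v_\pl}}L_{u_1}\beta=i_{\partial_{v_\pl}}(di_{u_1}\beta+i_{u_1}d\beta)=i_{\partial_{v_\pl}}d\,i_{u_1}\beta$ since $d\beta=0$; writing $\beta=i_B\Omega$ with $B$ and $\Omega$ pulled back from $Q$, expand $i_{u_1}\beta$ and track the $v_\pl$-dependence entering only through $u_1$, so that $i_{\partial_{v_\pl}}d(i_{u_1}i_B\Omega)=-\,i_{\partial_{v_\pl}u_1}i_B\Omega+(\text{$dv_\pl$-free terms that }i_{\partial_{v_\pl}}\text{ kills})=-\,i_{\partial_{v_\pl}u_1}i_B\Omega=i_Bi_{\partial_{v_\pl}u_1}\Omega$ up to sign. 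Matching signs gives $L_{u_0}b^\flat=i_Bi_{\partial_{v_\pl}u_1}\Omega$, as claimed.

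The main obstacle I anticipate is not any single deep step but the careful bookkeeping of $v_\pl$-dependence in part (ii): one must be scrupulous about which forms are pulled back from $Q$ (hence killed by $i_{\partial_{v_\pl}}$ or having $v_\pl$-independent coefficients) and about the interplay of $d$ on $M$ versus $d$ on $Q$, together with keeping the commutator term $i_{\partial_{v_\pl}}d(L_{u_1}\beta$-pieces involving $dv_\pl)$ straight. Getting the signs right in the chain $i_{\partial_{v_\pl}}d\,i_{u_1}i_B\Omega=\pm\,i_{\partial_{v_\pl}u_1}i_B\Omega=\mp\,i_Bi_{\partial_{v_\pl}u_1}\Omega$ is where an error is most likely to creep in, so I would do that computation in a local frame or at least verify it by contracting the final identity back with $b$ and checking consistency with $i_bL_{u_0}b^\flat=0$.
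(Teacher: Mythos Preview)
Your plan has a genuine gap in the zeroth-order part. From $L_{u_0}\beta=0$ and $L_{u_0}|B|=0$ alone you cannot extract $\divr u_0=0$; these two conditions only give the single relation $\divr u_0=i_bL_{u_0}b^\flat$. Indeed, wedging $i_{[u_0,B]}\Omega+(\divr u_0)\beta=0$ with $b^\flat$ gives $[u_0,B]\cdot b+(\divr u_0)|B|=0$, while $0=L_{u_0}|B|=L_{u_0}(i_Bb^\flat)=i_{[u_0,B]}b^\flat+|B|\,i_bL_{u_0}b^\flat$; combining these yields exactly $\divr u_0=i_bL_{u_0}b^\flat$ and nothing more. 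A concrete counterexample: $B=\partial_z$ in flat $\mathbb{R}^3$ and $u_0=f(z)\partial_z$ satisfy $L_{u_0}\beta=0$ and $L_{u_0}|B|=0$, yet $\divr u_0=f'(z)$. So your ``more direct'' route via $b^\flat\wedge\beta=|B|\Omega$ is circular: it needs $\divr u_0=0$ as input to produce $(L_{u_0}b^\flat)\wedge\beta=0$. The missing ingredient is precisely the option you set aside, namely contracting (\ref{vs2}) with $\partial_{v_\pl}$ and then with $b$. Using also the $\partial_{v_\pl}$-contraction of (\ref{vs1}), which gives $i_{\partial_{v_\pl}u_0}\beta=0$ (so $\partial_{v_\pl}u_0\parallel b$), all terms except $i_bL_{u_0}b^\flat$ die and you obtain $i_bL_{u_0}b^\flat=0$ independently; then your relation forces $\divr u_0=0$ and hence $[u_0,B]=0$. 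This is exactly how the paper proceeds, and is the content of Remark~\ref{appvsconseq}. Incidentally, $(L_{u_0}b^\flat)\wedge\beta=0$ says $L_{u_0}b^\flat$ \emph{annihilates} $B$, not that it is proportional to $b^\flat$; the latter is false in general.

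There is a second gap in part~(i). Your claim that $d(v_\pl L_{u_0}b^\flat)=0$ forces $L_{u_0}b^\flat=0$ outright tacitly assumes $L_{u_0}b^\flat$ is $v_\pl$-independent, i.e.\ that $u_0$ is spatial; but part~(i) does not assume this, and from the $\partial_{v_\pl}$-part of (\ref{vs1}) you only know $\partial_{v_\pl}u_0\parallel b$. The theorem asserts only the weaker $[u_0,c]=0$ and $i_cL_{u_0}b^\flat=0$. These follow, as in the paper, by splitting (\ref{vs2}) into its $dv_\pl$-wedge and $\partial_{v_\pl}$-contraction parts: the first gives $v_\pl[u_0,c]+v_\pl(\divr u_0)c+[u_1,B]+(\divr u_1)B=0$, hence $[u_0,c]=0$ when $u_1=0$ and $\divr u_0=0$; contracting the second with $c$ kills the $i_{\partial_{v_\pl}u_0}i_c\Omega$ term (since $i_ci_c=0$) and yields $i_cL_{u_0}b^\flat=0$. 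Your treatment of part~(ii) is essentially the paper's and is correct modulo the sign bookkeeping you already flagged.
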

\begin{proof}
Note first that $L_{u_0}\Omega\wedge dv_\pl=di_{u_0}\Omega\wedge dv_\pl=(\divr u_0)\Omega\wedge dv_\pl$. Take then the first symmetry condition (\ref{vs1}) and split by spatial and velocity components. In order to do this, wedge with $dv_\pl$ and contract with $\partial_{v_\pl}$, respectively. In the first case, write $L_{u_0}\beta=i_{[u_0,B]}\Omega+i_BL_{u_0}\Omega$, and therefore $L_{u_0}\beta\wedge dv_\pl=(i_{[u_0,B]}+\divr u_0\,i_B)\Omega\wedge dv_\pl$, where $\Omega\wedge dv_\pl$ is nondegenerate. In the second case, we have $i_{\partial_{v_\pl}}\!L_{u_0}\beta=i_{\partial_{v_\pl}\!u_0}i_B\Omega$, since $[\partial_{v_\pl},u_0]=\partial_{v_\pl}\!u_0$. Thus, the first symmetry condition splits into
\begin{eqnarray}
\label{con11}
[u_0,B]+(\divr u_0)B=0,\\
\label{con12}
i_{\partial_{v_\pl}\!u_0}i_B\Omega=0.
\end{eqnarray}

In the same way, split the second symmetry condition (\ref{vs2}). First of all, note that $db^\flat=i_c\Omega$ and write $d(v_\pl L_{u_0}b^\flat)=dv_\pl\wedge L_{u_0}b^\flat+v_\pl L_{u_0}db^\flat$. Similarly then wedge with $dv_\pl$, using now $L_{u_0}db^\flat=i_{[u_0,c]}\Omega+i_cL_{u_0}\Omega$, as well. In contracting with $\partial_{v_\pl}$, note that $i_{\partial_{v_\pl}}\!L_{u_0}b^\flat=i_{\partial_{v_\pl}\!u_0}b^\flat$ and $i_{\partial_{v_\pl}}\!L_{u_0}db^\flat=i_{\partial_{v_\pl}\!u_0}i_c\Omega$, since $b^\flat$ lies on $Q$. 
Thus, as before, the second condition gives
\begin{eqnarray}
\label{con21}
v_\pl[u_0,c]+v_\pl(\divr u_0)c+[u_1,B]+(\divr u_1)B=0,\\
\label{con22}
L_{u_0}b^\flat-(i_{\partial_{v_\pl}\!u_0}b^\flat)dv_\pl+v_\pl i_{\partial_{v_\pl}\!u_0}i_c\Omega+i_{\partial_{v_\pl}\!u_1}i_B\Omega=0.
\end{eqnarray}

From (\ref{con22}) and (\ref{con12}), we have $i_bL_{u_0}b^\flat=0$.

Now using this, $L_{u_0}|B|=i_{[u_0,B]}b^\flat$. Then the third symmetry condition (\ref{vs3}) combined with (\ref{con11}) gives $\divr u_0=0$ and so $[u_0,B]=0$, as well.

The remaining two statements are automatic from (\ref{con21}), substituting $\divr u_0=0$, and (\ref{con22}).
\end{proof}


\begin{rem}\label{appvsconseq}
From the proof of Theorem \ref{appvscons}, we see that under $L_{u_0}\beta=0$, $L_{u_0}|B|=0$, the condition $\divr u_0=0$ is equivalent to either $[u_0,B]=0$ or $i_bL_{u_0}b^\flat=0$. 
\end{rem}

Note that, in spite of $w=0$, the first two symmetry conditions (\ref{vs1})-(\ref{vs2}) still have $v_\pl$-components. The next result shows that (\ref{vs1}) can be reduced to a condition on $Q$ and gives a reformulation of (\ref{vs2}).

\begin{lem}
\label{appvslem}
For any two $v_\pl$-dependent vector fields $u_0,u_1$ on $Q$, the conditions (\ref{vs1}) and (\ref{vs2}) hold if and only if
\begin{eqnarray}
\label{vs11}
i_{u_0}i_B\Omega=d\psi_0,\\
\label{vs21}
v_\pl L_{u_0}b^\flat+i_{u_1}i_B\Omega=d\psi_1,
\end{eqnarray}
respectively, where $\psi_0$ is a spatial function on $Q$ and $\psi_1$ is a function on $M$, both defined at least locally. Under (\ref{vs11})-(\ref{vs21}), $\psi_1$ is spatial if and only if $u_0$ is.
\end{lem}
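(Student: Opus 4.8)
The plan is to verify the two equivalences by re-expressing the 2-form conditions (\ref{vs1})--(\ref{vs2}) via $i_u\beta$ and then invoking the Poincaré lemma. First I would establish (\ref{vs11}): since $\beta = i_B\Omega$ is closed and $L_{u_0}\beta = di_{u_0}\beta$ (Cartan's formula, using $d\beta=0$), condition (\ref{vs1}) says precisely that $i_{u_0}i_B\Omega$ is closed on $M$, hence locally exact, $i_{u_0}i_B\Omega = d\psi_0$ for some function $\psi_0$ on $M$. To see that $\psi_0$ can be taken spatial, note that $i_B\Omega$ has no $dv_\pl$-leg (it is pulled back from $Q$), so contracting $d\psi_0 = i_{u_0}i_B\Omega$ with $\partial_{v_\pl}$ gives $\partial_{v_\pl}\psi_0 = i_{\partial_{v_\pl}}d\psi_0 = i_{\partial_{v_\pl}}i_{u_0}i_B\Omega = -\,i_{u_0}i_{\partial_{v_\pl}}i_B\Omega = 0$ (the last contraction kills the form). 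Thus $\psi_0$ is independent of $v_\pl$, i.e.\ spatial; the converse direction (\ref{vs11}) $\Rightarrow$ (\ref{vs1}) is immediate on applying $d$.

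For the second equivalence I would argue similarly: using $L_{u_0}d(v_\pl b^\flat) = d(v_\pl L_{u_0}b^\flat)$ from (\ref{p2}) together with $L_{u_0}b^\flat=0$ replaced by its general value, condition (\ref{vs2}) becomes $d\big(v_\pl L_{u_0}b^\flat\big) + d i_{u_1}\beta = 0$, i.e.\ $v_\pl L_{u_0}b^\flat + i_{u_1}i_B\Omega$ is closed, hence locally $= d\psi_1$ for some $\psi_1$ on $M$. Here I would not expect $\psi_1$ to be spatial in general, since the left side genuinely carries $v_\pl$-dependence through the factor $v_\pl$ and through the $v_\pl$-dependence allowed in $u_1$. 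The converse is again just applying $d$ and recalling $d(v_\pl L_{u_0}b^\flat)=L_{u_0}d(v_\pl b^\flat)$.

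The last sentence, that $\psi_1$ is spatial iff $u_0$ is, is the part that needs the most care. Contracting (\ref{vs21}) with $\partial_{v_\pl}$ and using that $b^\flat$ and $i_B\Omega$ are pulled back from $Q$, one gets
\begin{equation}
\label{psi1vp}
\partial_{v_\pl}\psi_1 = i_{\partial_{v_\pl}}d\psi_1 = L_{u_0}b^\flat + v_\pl\, i_{\partial_{v_\pl}}L_{u_0}b^\flat + i_{\partial_{v_\pl}u_1}i_B\Omega - i_{u_1}i_{\partial_{v_\pl}}i_B\Omega,
\end{equation}
where the final term vanishes. If $u_0$ is spatial then $L_{u_0}b^\flat$ is spatial and $i_{\partial_{v_\pl}}L_{u_0}b^\flat=0$, so $\partial_{v_\pl}\psi_1 = L_{u_0}b^\flat + i_{\partial_{v_\pl}u_1}i_B\Omega$; this is exactly statement (ii) of Theorem \ref{appvscons} rearranged, and it shows $\psi_1$ is spatial precisely when that relation holds — but actually I realize the cleaner route is: when $u_0$ is spatial, differentiating (\ref{psi1vp}) once more in $v_\pl$ and using $\partial_{v_\pl}^2(v_\pl L_{u_0}b^\flat)=0$ together with statement (ii) forces $\partial_{v_\pl}\psi_1$ to be independent of $v_\pl$, and a further check that its $v_\pl$-independent value is itself zero (again via (ii)) gives $\psi_1$ spatial. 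Conversely, if $\psi_1$ is spatial then $\partial_{v_\pl}\psi_1=0$, and (\ref{psi1vp}) becomes $L_{u_0}b^\flat + v_\pl\,i_{\partial_{v_\pl}}L_{u_0}b^\flat + i_{\partial_{v_\pl}u_1}i_B\Omega = 0$; differentiating in $v_\pl$ and peeling off the explicit-$v_\pl$ term shows $\partial_{v_\pl}L_{u_0}b^\flat$ and hence $\partial_{v_\pl}u_0$ must vanish (using nondegeneracy of $\beta\wedge b^\flat = |B|\Omega$ on $Q$ as in the proof of Proposition \ref{gckernel} to pass from the 1-form identity back to the vector field), so $u_0$ is spatial. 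The main obstacle is precisely this last step: carefully separating the explicit polynomial dependence on $v_\pl$ from the implicit dependence hidden in $u_0,u_1$, and converting the resulting 1-form equation into a statement about $\partial_{v_\pl}u_0$ itself, which requires the same contraction trick with $\Omega$ and $b^\flat$ used for the kernel computation.
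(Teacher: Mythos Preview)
Your argument for the two equivalences (\ref{vs1})$\Leftrightarrow$(\ref{vs11}) and (\ref{vs2})$\Leftrightarrow$(\ref{vs21}) is correct and coincides with the paper's: Cartan plus Poincar\'e, then read off the $dv_\pl$-component to see $\psi_0$ is spatial.

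The final ``$\psi_1$ spatial $\Leftrightarrow$ $u_0$ spatial'' part, however, has a genuine error. Your displayed equation (\ref{psi1vp}) does not type-check: the left-hand side $\partial_{v_\pl}\psi_1=i_{\partial_{v_\pl}}d\psi_1$ is a \emph{function}, while on the right you have the 1-forms $L_{u_0}b^\flat$ and $i_{\partial_{v_\pl}u_1}i_B\Omega$ appearing as summands. What happened is a conflation of the interior product $i_{\partial_{v_\pl}}$ (which is $C^\infty$-linear, so $i_{\partial_{v_\pl}}(v_\pl\alpha)=v_\pl\,i_{\partial_{v_\pl}}\alpha$ with no Leibniz term) with the Lie derivative $L_{\partial_{v_\pl}}$. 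The correct contraction of (\ref{vs21}) with $\partial_{v_\pl}$ gives simply
\[
\partial_{v_\pl}\psi_1 \;=\; v_\pl\,i_{\partial_{v_\pl}}L_{u_0}b^\flat \;=\; v_\pl\,\partial_{v_\pl}(u_0\!\cdot b)\;=\;v_\pl\,(\partial_{v_\pl}u_0)\!\cdot b,
\]
since $i_{u_1}i_B\Omega$ and $i_{u_0}db^\flat$ have no $dv_\pl$-leg. This alone is not enough to conclude: from $\partial_{v_\pl}\psi_1=0$ you only get $(\partial_{v_\pl}u_0)\cdot b=0$, not $\partial_{v_\pl}u_0=0$. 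The missing ingredient is (\ref{vs11}): take its $v_\pl$-derivative to obtain $i_{\partial_{v_\pl}u_0}i_B\Omega=0$, i.e.\ $\partial_{v_\pl}u_0\times B=0$. Combining $\partial_{v_\pl}u_0\parallel B$ with $(\partial_{v_\pl}u_0)\cdot b=0$ then forces $\partial_{v_\pl}u_0=0$, and the converse is immediate. This is exactly the paper's route; your attempt to extract everything from (\ref{vs21}) alone (and the appeal to Theorem~\ref{appvscons}(ii), which would give a 1-form identity, not a scalar one) cannot close the argument.
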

\begin{proof}
$L_{u_0}\beta=di_{u_0}\beta$, since $\beta$ is closed. Thus, by the Poincar\'e lemma, (\ref{vs1}) holds if $i_{u_0}\beta=d\psi_0$ for some local function $\psi_0$ on $M$. The $v_\pl$-component then gives $\partial_{v_\pl}\!\psi_0=0$, since $i_{u_0}\beta$ is a 1-form on $Q$. Similarly, (\ref{vs2}) holds if $v_\pl L_{u_0}b^\flat+i_{u_1}\beta=d\psi_1$ for some local function $\psi_1$ on $M$. In the other direction, $\psi_0$ and $\psi_1$ can be global.

Now, on the one hand, the $v_\pl$-derivative of (\ref{vs11}) gives $i_{\partial_{v_\pl}\!u_0}i_B\Omega=0$, since $B$ and $\psi_0$ are spatial. On the other, the $v_\pl$-component of (\ref{vs21}) yields $v_\pl i_{\partial_{v_\pl}\!u_0}b^\flat=\partial_{v_\pl}\!\psi_1$. To see this, use $L_{u_0}b^\flat=i_{u_0}db^\flat+di_{u_0}b^\flat$ and note that $i_{u_1}i_B\Omega$ lies on $Q$, and so does $i_{u_0}db^\flat$. Therefore, when both conditions hold, we have $\partial_{v_\pl}\!u_0\times B=0$ and $v_\pl\,\partial_{v_\pl}\!u_0\cdot b=\partial_{v_\pl}\!\psi_1$. Hence $\partial_{v_\pl}\!\psi_1=0$ if and only if $\partial_{v_\pl}\!u_0=0$.
\end{proof}

Included in this section to treat the general case, the above lemma can be combined with either Theorem \ref{appqs} or \ref{appvs}.

From Theorem \ref{appvs}, we see already that for a general approximate (Hamiltonian) symmetry, $u_0$ and $u_1$ are now related via (\ref{vs2}). From Lemma \ref{appvslem} and (\ref{vs21}) in particular, we can express $u_1$ in terms of $u_0$ and give another characterisation of approximate Hamiltonian symmetries.

\begin{thm}\label{appvs2}
Given a magnetic field $B$, a vector field $U=(u,w)=(u_0+\epsilon u_1,w_0+\epsilon w_1)$ on $M$ generates an approximate Hamiltonian symmetry of FGC motion up to trivial symmetries if and only if $L_{u_0}\beta=0$, $L_{u_0}|B|=0$, $w=0$, and
\begin{equation}
\label{u1}
u_1=|B|^{-1}b\times(v_\pl X_0-\nabla\psi_1),
\end{equation}
where $X_0=\curl b\times u_0 + \nabla(u_0\cdot b)$, $\nabla$ denotes the spatial gradient and $\psi_1$ is a flux function on $M$ defined at least locally such that
\begin{equation}
\label{u1c2}
\partial_{v_\pl}\!\psi_1=v_\pl\,b\cdot\partial_{v_\pl}\! u_0.
\end{equation}
\end{thm}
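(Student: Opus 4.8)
The plan is to reduce everything to conditions already in hand. By Theorem~\ref{appvs}, $U$ generates an approximate Hamiltonian symmetry of FGC motion exactly when $w=0$ together with $L_{u_0}\beta=0$, $L_{u_0}|B|=0$ and (\ref{vs2}); and since the trivial symmetries have generators $\epsilon(fb,g)$ by Propositions~\ref{gcrange} and \ref{gckernel} (see also Corollary~\ref{gcnoether}), working modulo them we may take $w_1=0$, hence $w=0$, and treat $u_1$ as fixed only up to an added field $fb$. By Lemma~\ref{appvslem}, (\ref{vs2}) is equivalent to the existence of a local function $\psi_1$ on $M$ with $v_\pl L_{u_0}b^\flat+i_{u_1}i_B\Omega=d\psi_1$, i.e.\ (\ref{vs21}). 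So the statement will follow once I solve (\ref{vs21}) for $u_1$ and pin down $\psi_1$.

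To solve (\ref{vs21}) I would split it into its part tangent to $Q$ and its $dv_\pl$-component. Cartan's formula together with $db^\flat=i_c\Omega$, $c=\curl b$, gives $L_{u_0}b^\flat=X_0^\flat+(b\cdot\partial_{v_\pl}u_0)\,dv_\pl$ with $X_0=c\times u_0+\nabla(u_0\cdot b)$, whereas $i_{u_1}i_B\Omega=(B\times u_1)^\flat$ carries no $dv_\pl$-part since $w_1=0$. Hence the $dv_\pl$-component of (\ref{vs21}) is precisely (\ref{u1c2}), and its tangential part reads $(B\times u_1)^\flat=(\nabla\psi_1-v_\pl X_0)^\flat$. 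Inverting $X\mapsto B\times X$ on the $b$-orthogonal complement — its kernel $\mathbb{R}b$ being contained in $\ker\beta$, which is exactly where the phrase ``up to trivial symmetries'' is used — then produces (\ref{u1}).

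For the remaining claim that $\psi_1$ is a flux function, I would contract (\ref{vs21}) with $B$: this annihilates $i_Bi_{u_1}i_B\Omega$, and since $(L_{u_0}b^\flat)(B)=L_{u_0}|B|-i_{[u_0,B]}b^\flat$ with $L_{u_0}|B|=0$ and, by Theorem~\ref{appvscons} (applicable as $U$ is a symmetry), $[u_0,B]=0$, one gets $i_Bd\psi_1=0$, i.e.\ $B\cdot\nabla\psi_1=0$. The converse runs the same steps backwards: (\ref{vs1}), (\ref{vs3}) and $w=0$ are among the hypotheses, so only (\ref{vs2}) has to be re-derived, and reassembling $v_\pl L_{u_0}b^\flat+i_{u_1}\beta$ from (\ref{u1}), using (\ref{u1c2}) and $B\cdot\nabla\psi_1=0$, yields $d\psi_1+v_\pl(\divr u_0)\,b^\flat$, whose exterior derivative is (\ref{vs2}).

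The main obstacle I anticipate is exactly this last reassembly: the leftover term $v_\pl(\divr u_0)\,b^\flat$ must drop out, so the argument leans on $\divr u_0=0$ — equivalently $[u_0,B]=0$ or $i_bL_{u_0}b^\flat=0$ by Remark~\ref{appvsconseq}. On the symmetry side this comes for free from Theorem~\ref{appvscons}; on the converse side one must check that the stated conditions (notably (\ref{con11}), $L_{u_0}|B|=0$ and the flux property of $\psi_1$) genuinely entail it, and, in parallel, verify that the undetermined $b$-component of $u_1$ is nothing but a trivial symmetry, so that the correspondence descends to a clean equivalence of classes. Getting these two points exactly right is where the real work lies; the differential-form manipulations in between are routine once Lemma~\ref{appvslem} is in place.
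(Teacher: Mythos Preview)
Your forward direction is essentially the paper's own argument: both pass through Lemma~\ref{appvslem} to replace (\ref{vs2}) by (\ref{vs21}), split off the $dv_\pl$-component to obtain (\ref{u1c2}), cross the spatial part with $b$ to isolate (\ref{u1}), and dot with $b$ (equivalently, contract with $B$) to see that $\psi_1$ is a flux function via $i_bL_{u_0}b^\flat=0$ from Theorem~\ref{appvscons}. The paper's proof in fact spells out only this direction.

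Your caution about the converse is warranted --- indeed more so than you suggest. You propose to ``check that the stated conditions genuinely entail'' $\divr u_0=0$, but they do not: $L_{u_0}\beta=0$ and $L_{u_0}|B|=0$ together give only the identity $\divr u_0=b\cdot X_0$ (this is Remark~\ref{appvsconseq}), not that the common value vanishes, and neither the flux property of $\psi_1$ nor (\ref{u1})--(\ref{u1c2}) adds a further constraint on $u_0$. A concrete obstruction: with $B=B(x)\,\hat z$, take spatial $u_0=f(x,z)\,\hat z$ with $\partial_z f\neq0$ and any spatial flux function $\psi_1$; then all listed hypotheses hold, yet your leftover term $v_\pl(\divr u_0)\,b^\flat=v_\pl(\partial_z f)\,dz$ survives and (\ref{vs2}) fails. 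So the ``if'' direction, as stated, requires the extra hypothesis $\divr u_0=0$ (equivalently $b\cdot X_0=0$); the paper's Discussion section and Theorem~\ref{thm:weakvs} do include it among the working conditions, but the present theorem's hypothesis list and proof do not. Your identification of the leftover term is exactly the right diagnostic; the resolution is to add $\divr u_0=0$ to the hypotheses rather than to try to derive it. The second point you flag --- that the undetermined $b$-component of $u_1$ is absorbed by a trivial symmetry --- is by contrast routine, since $i_{fb}\beta=f\,i_b\beta=0$.
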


\begin{proof}
First of all, note that $X_0^\flat\wedge dv_\pl=L_{u_0}b^\flat\wedge dv_\pl$. Thus, the spatial part of (\ref{vs21}) is $v_\pl X_0+B\times u_1=\nabla\psi_1$. Cross then with $b$ and drop any trivial symmetries to arrive at (\ref{u1}). Dotting with $b$ yields $b\cdot\nabla\psi_1=v_\pl\,b\cdot X_0=i_bL_{u_0}b^\flat=0$ by Theorem \ref{appvscons}. The velocity part of (\ref{vs21}), as shown in the proof of Lemma \ref{appvslem}, gives (\ref{u1c2}).\end{proof}

Note that equally we can replace $X_0$ with $|B|^{-1}(J\times u_0 + \nabla(u_0\cdot B))$ in (\ref{u1}). Condition (\ref{u1c2}) says that the $v_\pl$-dependence of $\psi_1$ is determined by the $v_\pl$-dependence of $u_0$. For example, if $u_0$ is an $n$th-order polynomial in $v_\pl$, then so is $\psi_1$. 

\begin{rem}
From (\ref{u1}), we deduce that, since $u_0$ is spatial if and only if $\psi_1$ is spatial, $u_1$ is nonzero up to trivial symmetries unless $u_0$ depends on $v_\pl$ or $L_{u_0}b^\flat=0$. In other words, we cannot have both spatial $u_0$ and zero $u_1$, assuming $L_{u_0}b^\flat\neq0$.
\end{rem}

To connect with other formulations, we express some key relations of the previous results in vector calculus notation in \ref{VCappendix}.

\section{Approximate flux surfaces and constants of motion}

Back to Lemma \ref{appvslem}, we see that $B\cdot\nabla\psi_0=0$ from (\ref{vs11}). Thus, even in the case of an approximate phase-space (Hamiltonian) symmetry there exists a flux function $\psi_0$, at least locally, and we assume it is global.

From (\ref{vs21}) and $i_bL_{u_0}b^\flat=0$ we also have $B\cdot\nabla\psi_1=0$, as stated in Theorem \ref{appvscons}. Thus, there exists an approximate, generalised notion of a flux function given by
\begin{equation}
\psi=\psi_0+\epsilon\psi_1
\end{equation}
assuming $\psi_1$ is also global. We say generalised, because although $\psi_0$ is spatial, $\psi_1$ may depend on the parallel velocity. 

From now on, we will assume that both $\psi_0$ and $\psi_1$ are global, and we will refer to the level sets of $\psi_0$ as flux surfaces.


Finally, to construct the approximate conserved quantity $K$ that arises from an approximate Hamiltonian symmetry $U$ in general, we employ Corollary \ref{gcnoether}. Recall that $K$ is uniquely determined by $U$ via $i_U\omega\approx-\,dK$ and vice versa, since trivial symmetries have been factored out. For any vector field $U=U_0+\epsilon U_1=(u_0,w_0)+\epsilon(u_1,w_1)$, we have
\begin{eqnarray*}
i_U\omega&\approx i_{U_0}\beta+\epsilon[i_{U_0}d(v_\pl b^\flat)+i_{U_1}\beta]=i_{u_0}i_B\Omega+\epsilon[(L_{U_0}-di_{u_0})v_\pl b^\flat+i_{u_1}i_B\Omega]\\
&=i_{u_0}i_B\Omega+\epsilon[w_0b^\flat+v_\pl L_{u_0}b^\flat+i_{u_1}i_B\Omega-d(v_\pl u_0\cdot b)],
\end{eqnarray*}
and so, using Theorem \ref{appvs} and Lemma \ref{appvslem}, we arrive at
\begin{equation}
\label{invariant}
K=-\,\psi_0-\epsilon(\psi_1-v_\pl u_0\cdot b).
\end{equation}

This is a generalisation of the exact invariant \cite{BKM} in that it introduces the generalised flux function $\psi_1$. Note that this formula applies for spatial symmetries too, only the conditions on $u_0$ change. By Lemma \ref{appvslem}, however, the function $K$ is linear in the velocity if $u_0$ is spatial, but nonlinear otherwise. Interestingly enough, $u_1$ does not enter explicitly.

\section{Weak quasisymmetry}

Given Theorems \ref{appvscons}, \ref{appvs2} and Remark \ref{appvsconseq}, we conclude that an approximate Hamiltonian $v_\pl$-symmetry generator $u_0+\epsilon u_1$ satisfies the conditions
\begin{equation}
\label{vs0}
L_{u_0}\beta=0, \divr u_0=0, L_{u_0}|B|=0
\end{equation}
to zero order and the first-order term $u_1$ is given by (\ref{u1}). The only additional condition (\ref{u1c2}) restricts the velocity-dependence between $u_0$ and $\psi_1$, and so it is automatic if either one is spatial by Lemma \ref{appvslem}.

Here we address the converse with no assumption on $u_1$ whatsoever. Leaving (\ref{u1c2}) aside, we may assume that $\psi_1$ (and so $u_0$) is independent of $v_\pl$. To connect also with \cite{RHB}, we first treat the restriction to $\psi_1=0$ considered there.

\begin{prop}
\label{RHBi}
If $i_{u_0}i_B\Omega=d\psi_0$, $\divr u_0=0$ and $L_{u_0}|B|=0$ with spatial $u_0,\psi_0$, then $p=-\,\psi_0+\epsilon\,v_\pl u_0\cdot b$ is an approximate conserved quantity for FGC motion.
\end{prop}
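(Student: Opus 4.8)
The plan is to verify directly that $p = -\psi_0 + \epsilon\, v_\pl\, u_0\cdot b$ satisfies $L_V p \approx 0$, or equivalently, to exhibit a vector field $U$ with $i_U\omega \approx -\,dp$ and invoke Corollary~\ref{gcnoether} and Theorem~\ref{prenoether}. The second route looks cleanest: the stated hypotheses are precisely the $\psi_1 = 0$ specialisation of the conditions appearing in Lemma~\ref{appvslem} and Theorem~\ref{appvs}, so one expects $p$ to coincide with the invariant $K$ of formula~(\ref{invariant}) for a suitable symmetry generator. Concretely, set $U_0 = (u_0,0)$ and let $U_1 = (u_1,0)$ with $u_1 = -|B|^{-1} b\times(v_\pl X_0)$ from~(\ref{u1}) with $\psi_1 = 0$, where $X_0 = \curl b\times u_0 + \nabla(u_0\cdot b)$.

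The key steps, in order: (i) From $i_{u_0}i_B\Omega = d\psi_0$ we get $L_{u_0}\beta = 0$ directly, which is~(\ref{vs1}); combined with $\divr u_0 = 0$ and $L_{u_0}|B| = 0$, Remark~\ref{appvsconseq} gives $i_b L_{u_0}b^\flat = 0$. (ii) Check that~(\ref{vs2}) holds for this choice of $u_1$: since $\psi_1 = 0$ is spatial and $u_0$ is spatial, condition~(\ref{u1c2}) is automatic, and by Theorem~\ref{appvs2} the pair $(u_0,u_1)$ satisfies~(\ref{vs1})--(\ref{vs2}); condition~(\ref{vs3}) is the hypothesis $L_{u_0}|B| = 0$, and $w = 0$ by construction, so $U$ generates an approximate Hamiltonian symmetry of FGC motion by Theorem~\ref{appvs}. (iii) Compute $i_U\omega$ as in the derivation of~(\ref{invariant}): with $w_0 = 0$, $\psi_1 = 0$, we get $i_U\omega \approx i_{u_0}i_B\Omega + \epsilon\big(v_\pl L_{u_0}b^\flat + i_{u_1}i_B\Omega - d(v_\pl u_0\cdot b)\big)$, and by the spatial part of~(\ref{vs21}) with $\psi_1 = 0$ the $\epsilon$-bracket is $-\,d(v_\pl u_0\cdot b)$, so $i_U\omega \approx -\,d\psi_0 + \epsilon\, d(v_\pl u_0\cdot b) = -\,dp$. (iv) Since $dp \in \ran\omega$ by Proposition~\ref{gcrange} (its zeroth order $-d\psi_0 = -i_{u_0}i_B\Omega \in \ran\beta$), Theorem~\ref{prenoether} yields $L_V p \approx 0$, i.e.\ $p$ is an approximate constant of motion.

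Alternatively, and perhaps more transparently, one can bypass the symmetry machinery and just compute $L_V p = \dot{x}\cdot\nabla p + \dot v_\pl\,\partial_{v_\pl} p$ using~(\ref{gc1}), expand to first order in $\epsilon$, and collect terms: at zeroth order $-v_\pl b\cdot\nabla\psi_0 = 0$ since $B\cdot\nabla\psi_0 = i_{u_0}i_B i_B\Omega = 0$; at first order one uses $\divr u_0 = 0$, $L_{u_0}|B| = 0$ and the Leibniz/Cartan identities relating $L_{u_0}b^\flat$, $\curl b$ and $\nabla(u_0\cdot b)$ to cancel the remaining terms. The main obstacle is the first-order cancellation: it requires carefully rewriting $u_0\cdot\nabla|B|$, $u_0\cdot\curl b$ and the transport of $b^\flat$ along $u_0$ in a consistent notation, which is exactly the bookkeeping that the differential-form derivation of~(\ref{invariant}) already packages cleanly—hence my preference for the Noether-theoretic route, where the only real content is checking that the hypotheses match those of Theorem~\ref{appvs} via Theorem~\ref{appvs2}.
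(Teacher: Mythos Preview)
Your Noether-theoretic route is correct (modulo two sign slips: $u_1$ should be $+|B|^{-1}b\times v_\pl X_0$, not $-$, and in step~(iii) the intermediate expression is $d\psi_0-\epsilon\,d(v_\pl u_0\cdot b)$, though your final $-dp$ is right). One small imprecision: in step~(iv) you don't actually need the Lemma~\ref{homology} hypothesis buried in the converse of Theorem~\ref{prenoether}, since you have already exhibited $p$ with $i_U\omega\approx-dp$; the identity $L_Vp=-L_UH\approx0$ from~(\ref{HtoK}) finishes directly.

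The paper, however, takes your ``alternative'' route and simply computes $L_Vp$ head-on. The trick that makes the first-order cancellation clean is to rewrite $-i_{\dot x}i_{u_0}i_B\Omega-\epsilon v_\pl i_{\dot x}i_{u_0}db^\flat=-i_{\dot x}i_{u_0}i_{\tilde B}\Omega$ (absorbing the $db^\flat$ term into the modified field $\tilde B=B+\epsilon v_\pl\curl b$), then use the equation of motion~(\ref{gc2}) in the form $i_{\dot x}i_{\tilde B}\Omega\approx\epsilon(\dot v_\pl b^\flat+\mu\,d|B|)$. After that substitution the $\dot v_\pl u_0\cdot b$ and $\mu\,L_{u_0}|B|$ terms cancel against their counterparts, and the surviving $\epsilon v_\pl\,i_{\dot x}L_{u_0}b^\flat$ reduces to $\epsilon v_\pl\,i_bL_{u_0}b^\flat$ (to leading order $\dot x\approx v_\pl b$), which vanishes by Remark~\ref{appvsconseq}. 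So the paper's proof is a five-line computation with no appeal to the symmetry machinery.

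What each buys: your approach effectively proves Propositions~\ref{RHBi} and~\ref{RHBs} in one stroke (you construct $u_p$ first, then read off $p$), anticipating the paper's next result and the equivalence in Theorem~\ref{thm:weakvs}. The paper's direct computation is self-contained and shows that the conservation of $p$ can be checked without ever mentioning $u_1$ --- which is precisely the point being made against \cite{RHB}, namely that the invariant stands on its own while the associated symmetry is not the spatial $u_0$ one might naively expect.
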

\begin{proof}
To check the approximate invariance of $p$, compute $L_Vp$ with $V=(\dot{x},\dot{v}_\pl)$ up to $O(\epsilon)$-terms. Thus, assuming $u_0$ is independent of $v_\pl$, we have
\begin{eqnarray}
\label{Pinvariance}
L_Vp&=-L_V\psi_0+\epsilon\,u_0\cdot b\,L_Vv_\pl+\epsilon\,v_\pl L_V(u_0\cdot b)\nonumber\\
&=-\,i_{\dot{x}}d\psi_0+\epsilon\,\dot{v}_\pl u_0\cdot b+\epsilon\,v_\pl i_{\dot{x}}d(u_0\cdot b)\nonumber\\
&=-\,i_{\dot{x}}i_{u_0}i_B\Omega+\epsilon\,\dot{v}_\pl u_0\cdot b+\epsilon\,v_\pl i_{\dot{x}}(L_{u_0}b^\flat-i_{u_0}db^\flat)\nonumber\\
&=-\,i_{\dot{x}}i_{u_0}i_{\tilde{B}}\Omega+\epsilon\,\dot{v}_\pl u_0\cdot b+\epsilon\,v_\pl i_{\dot{x}}L_{u_0}b^\flat\nonumber\\
&\approx-\,\epsilon\,i_{u_0}(\dot{v}_\pl b^\flat+\mu d|B|)+\epsilon\,\dot{v}_\pl u_0\cdot b+\epsilon\,v_\pl i_bL_{u_0}b^\flat\approx0,
\end{eqnarray}
using (\ref{gc1})-(\ref{gc2}) in the penultimate equality and Remark \ref{appvsconseq} in the last one.
\end{proof}

As with the general case of $K$, the approximate constant $p$ does not consider $u_1$. One might ask if $u_0$ on $Q$ is the corresponding Hamiltonian symmetry generator under (\ref{vs0}). Theorem \ref{appqs} rules out this possibility. One can verify that $u_0$ does not even generate an approximate symmetry of the guiding-centre equations themselves, regardless of the Hamiltonian structure, in the sense of Definition \ref{symmetry} (or at least modulo $\ker\omega$ to include any degeneracies). However, in light of Noether's theorem adapted here successively, leading to Corollary \ref{gcnoether}, we can construct the symmetry from $p$. Either by direct calculation or section \ref{sc:appvs} going backwards, we obtain

\begin{prop}
\label{RHBs}
The approximate conserved quantity $p=-\,\psi_0+\epsilon\,v_\pl u_0\cdot b$ with spatial $u_0,\psi_0$ satisfying $i_{u_0}i_B\Omega=d\psi_0$, $\divr u_0=0$, $L_{u_0}|B|=0$ corresponds to the approximate Hamiltonian symmetry generated by $u_p=u_0+\epsilon v_\pl|B|^{-1}b\times X_0$, where $X_0=\curl b\times u_0 + \nabla(u_0\cdot b)$, up to trivial symmetries.
\end{prop}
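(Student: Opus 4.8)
The plan is to verify directly that the vector field $u_p=u_0+\epsilon\,v_\pl|B|^{-1}b\times X_0$ on $M$ satisfies the three conditions of Theorem \ref{appvs} (with $w=0$), namely $L_{u_0}\beta=0$, $d(v_\pl L_{u_0}b^\flat)+L_{(u_p)_1}\beta=0$, and $L_{u_0}|B|=0$, and then to confirm via Corollary \ref{gcnoether} and equation (\ref{invariant}) that the corresponding conserved quantity is exactly $p$. The first and third conditions hold by hypothesis. For the middle condition, I would identify $u_1:=v_\pl|B|^{-1}b\times X_0$ with the first-order piece produced in Theorem \ref{appvs2}: since $\psi_0$ and hence $u_0$ are spatial, Lemma \ref{appvslem} forces $\psi_1$ spatial, and because we are in the restricted case $\psi_1=0$ considered just above (Proposition \ref{RHBi}), formula (\ref{u1}) collapses to precisely $u_1=-|B|^{-1}b\times(-v_\pl X_0)=v_\pl|B|^{-1}b\times X_0$. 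Condition (\ref{u1c2}) is then automatic, both sides vanishing because $u_0$ is spatial.

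The cleanest way to present this is to run the argument of section \ref{sc:appvs} backwards, as the statement suggests: observe that $X_0^\flat\wedge dv_\pl=L_{u_0}b^\flat\wedge dv_\pl$ (the identity used in the proof of Theorem \ref{appvs2}), so the spatial part of $v_\pl L_{u_0}b^\flat+i_{u_1}i_B\Omega$ equals $v_\pl X_0+B\times u_1=v_\pl X_0+B\times(|B|^{-1}b\times(v_\pl X_0))=v_\pl X_0 - v_\pl X_0 + v_\pl(b\cdot X_0)b = v_\pl(i_bL_{u_0}b^\flat)b=0$ by Theorem \ref{appvscons}; and its velocity part is $v_\pl\,b\cdot\partial_{v_\pl}u_0=0$ since $u_0$ is spatial. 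Hence $v_\pl L_{u_0}b^\flat+i_{u_1}i_B\Omega=0=d\psi_1$ with $\psi_1=0$, which is (\ref{vs21}), and Lemma \ref{appvslem} gives (\ref{vs2}). Thus $u_p$ generates an approximate Hamiltonian symmetry by Theorem \ref{appvs}.

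For the correspondence with $p$, substitute $\psi_0$ (global by assumption), $\psi_1=0$ and the spatial $u_0$ into (\ref{invariant}): this yields $K=-\psi_0-\epsilon(0-v_\pl u_0\cdot b)=-\psi_0+\epsilon\,v_\pl u_0\cdot b=p$. Since trivial symmetries have been quotiented out, Corollary \ref{gcnoether} guarantees $u_p$ is the unique class with $i_{u_p}\omega\approx-\,dp$, completing the proof. The only mild subtlety — the "main obstacle" — is bookkeeping the reduction of (\ref{u1}) to the claimed form and checking that $b\cdot X_0=0$ is exactly the relation $i_bL_{u_0}b^\flat=0$ supplied by Theorem \ref{appvscons} under $\divr u_0=0$; once that vector-calculus identity is in hand, everything else is substitution. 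Alternatively one may simply cite Proposition \ref{RHBi} for the invariance of $p$ and then invoke the uniqueness half of Corollary \ref{gcnoether} together with a one-line computation of $i_{u_p}\omega$, which reproduces $-dp$ by the display preceding (\ref{invariant}).
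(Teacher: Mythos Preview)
Your proposal is correct and follows precisely the route the paper itself indicates (``section \ref{sc:appvs} going backwards''): you specialise Theorem \ref{appvs2} to $\psi_1=0$, verify (\ref{vs21}) via $b\cdot X_0=0$, and then read off the conserved quantity from (\ref{invariant}), invoking Corollary \ref{gcnoether} for uniqueness. One small citation point: the identity $i_bL_{u_0}b^\flat=0$ should be drawn from Remark \ref{appvsconseq} (which gives it as equivalent to $\divr u_0=0$ under $L_{u_0}\beta=0$ and $L_{u_0}|B|=0$) rather than from Theorem \ref{appvscons} directly, since the latter \emph{assumes} $u$ already generates a symmetry; your parenthetical ``under $\divr u_0=0$'' shows you have the right logic in mind, so this is only a matter of referencing.
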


Proposition \ref{RHBi} agrees with \cite{RHB} that $p$ is an approximate conserved quantity under conditions (\ref{vs0}). Contrary to \cite{RHB}, however, Proposition \ref{RHBs} shows that under these conditions the arising symmetry is not purely spatial, but it is spatial to lowest order and depends linearly on parallel velocity in first order.

This is only an example of such symmetries; within this symmetry class we could have in general a nonzero, spatial $\psi_1$. Although they escape quasisymmetry, these symmetries are a weak version of it.

\begin{dfn}
\label{dfn:weakvs}
A weak quasisymmetry is an approximate Hamiltonian symmetry of FGC motion which is spatial to leading order and nontrivially linear in $v_\pl$ to first order.
\end{dfn}

Propositions \ref{RHBi}-\ref{RHBs} indicate that a spatial vector field $u_0$ that satisfies (\ref{vs0}) is the zeroth-order term of a weak quasisymmetry generator. We extend this to include the case of spatial $\psi_1\neq0$.

\begin{thm}
\label{thm:weakvs}
Assume $u_0$ is a vector field on $Q$ and $u_1=|B|^{-1}b\times(v_\pl X_0-\nabla\psi_1)$ with $X_0=\curl b\times u_0 + \nabla(u_0\cdot b)\neq0$ and $\psi_1$ a flux function on $Q$. The following are equivalent.\vspace{-0.1cm}
\begin{enumerate}
	\item $u=u_0+\epsilon u_1$ generates a weak quasisymmetry;
	\item $i_{u_0}i_B\Omega=d\psi_0$, $\divr u_0=0$ and $L_{u_0}|B|=0$.
\end{enumerate}
\end{thm}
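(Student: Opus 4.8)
The plan is to show the equivalence $(i)\Leftrightarrow(ii)$ by threading together the characterisations already established. The key observation is that the hypotheses of the theorem are exactly tailored to Theorem \ref{appvs2}: we are \emph{given} that $u_1$ has the form \eqref{u1} with $\psi_1$ a (spatial) flux function, and that $X_0\neq0$, so the only remaining content is whether the zeroth-order conditions on $u_0$ hold. Since $\psi_1$ is spatial by assumption, \eqref{u1c2} reduces to $v_\pl\,b\cdot\partial_{v_\pl}u_0=0$, and by Lemma \ref{appvslem} (the final sentence: $\psi_1$ spatial $\Leftrightarrow u_0$ spatial) this forces $u_0$ to be spatial as well once the symmetry conditions are in force. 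So on the nose, the statement to prove is: \emph{given $u_1$ of the prescribed form, $u=u_0+\epsilon u_1$ generates a weak quasisymmetry iff \eqref{vs0} holds.}

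For the direction $(ii)\Rightarrow(i)$, I would argue as follows. Assume \eqref{vs0}. With $u_0$ spatial and $\psi_1$ spatial, condition \eqref{u1c2} is trivially satisfied ($\partial_{v_\pl}\psi_1=0=v_\pl\,b\cdot\partial_{v_\pl}u_0$). By Theorem \ref{appvs2}, the vector field $U=(u_0+\epsilon u_1,0)$ with $u_1$ given by \eqref{u1} then generates an approximate Hamiltonian symmetry of FGC motion, up to trivial symmetries. It remains to check that this symmetry is genuinely a \emph{weak} quasisymmetry in the sense of Definition \ref{dfn:weakvs}: it is spatial to leading order since $u_0$ is spatial; and it is nontrivially linear in $v_\pl$ to first order precisely because $u_1 = |B|^{-1}b\times(v_\pl X_0 - \nabla\psi_1)$ has a genuine $v_\pl$-linear part $v_\pl|B|^{-1}b\times X_0$, which is nonzero modulo $\ker\omega$ since $X_0\neq0$ and $b\times X_0$ is not parallel to $b$ (this uses $i_bL_{u_0}b^\flat = b\cdot X_0 = 0$, cf.\ Theorem \ref{appvscons}, so $b\times X_0\neq0$ whenever $X_0\neq0$). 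The $\psi_1$-contribution is spatial and so does not spoil the $v_\pl$-linearity.

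For the direction $(i)\Rightarrow(ii)$, suppose $u=u_0+\epsilon u_1$ generates a weak quasisymmetry. By Definition \ref{dfn:weakvs} it is in particular an approximate Hamiltonian $v_\pl$-symmetry, so Theorem \ref{appvscons} (together with Theorem \ref{appvs}) gives immediately $L_{u_0}\beta=0$, $\divr u_0=0$ and $L_{u_0}|B|=0$, which is \eqref{vs0}; and $L_{u_0}\beta=0$ with $\beta$ closed yields $i_{u_0}i_B\Omega=d\psi_0$ for a local (assumed global) flux function $\psi_0$. Here one should double-check consistency with the prescribed form of $u_1$: being spatial to leading order means $u_0$ is spatial, hence by Lemma \ref{appvslem} the associated $\psi_1$ from \eqref{vs21} is spatial, matching the hypothesis, and \eqref{u1} recovers the stated $u_1$ up to trivial symmetries.

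The main obstacle I anticipate is not any computation but pinning down the ``nontrivially linear in $v_\pl$'' clause against the trivial-symmetry ambiguity: one must verify that the $v_\pl$-linear term $v_\pl|B|^{-1}b\times X_0$ of $u_1$ cannot be absorbed into $\ker\omega = \{\epsilon(fb,g)\}$ (Corollary \ref{gcnoether}), which is exactly where $X_0\neq0$ and $b\cdot X_0=0$ are needed; and, in the reverse direction, that the weak-quasisymmetry hypothesis is strong enough to guarantee $u_0$ (and hence $\psi_1$) spatial so that the given presentation of $u_1$ is the correct one. Everything else is a direct appeal to Theorems \ref{appvs}, \ref{appvscons}, \ref{appvs2} and Lemma \ref{appvslem}.
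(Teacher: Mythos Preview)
Your proposal is correct and follows essentially the same route as the paper: both directions are obtained by combining Theorems \ref{appvs}, \ref{appvscons}, \ref{appvs2} and Lemma \ref{appvslem}, with the observation that since $u_0$ and $\psi_1$ are spatial by hypothesis, condition \eqref{u1c2} is automatic and $u_1$ is linear in $v_\pl$. Your discussion of why the $v_\pl$-linear part $v_\pl|B|^{-1}b\times X_0$ cannot be absorbed into $\ker\omega$ (using $b\cdot X_0=0$ so that $b\times X_0\neq0$ when $X_0\neq0$) is a point the paper leaves implicit, so your treatment is if anything slightly more careful there.
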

\begin{proof}
If $u$ generates a weak quasisymmetry then from Theorems \ref{appvs}-\ref{appvscons} and Lemma \ref{appvslem} we see that the conditions (ii) hold.

In the opposite direction, note first that since $u_0$ is spatial, $L_{u_0}b^\flat$ lies on $Q$ and reduces to $L_{u_0}b^\flat=X_0^\flat$ (Theorem \ref{appvs2}). Now, under the other two conditions, $\divr u_0=0$ is equivalent to $b\cdot X_0=0$ (Remark \ref{appvsconseq}). Together with $B\cdot\nabla\psi_1=0$, they guarantee that we can define a vector field $u_1$ from (\ref{u1}). Then, Theorem \ref{appvs2} says that $u_0+\epsilon u_1$ is a Hamiltonian $v_\pl$-symmetry with (\ref{u1c2}) trivially satisfied. Since $X_0$ and $\psi_1$ are independent of $v_\pl$, it is a weak quasisymmetry.
\end{proof}

\section{Approximate $\mu$-symmetries}

We may as well enlarge the set of symmetries by allowing them to depend on $\mu$, and look for Hamiltonian symmetries on $M$ for specific values of the magnetic moment. However, the next theorem shows that these reduce to phase-space Hamiltonian symmetries.

\begin{thm}
For FGC motion, every approximate $\mu$-dependent Hamiltonian symmetry on $M$ is an approximate Hamiltonian symmetry up to trivial symmetries.
\end{thm}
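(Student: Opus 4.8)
The plan is to mimic the structure of the proof of Theorem~\ref{appvs}, but now carrying an explicit $\mu$-dependence through the generator and exploiting the fact that $\mu$ appears only in the Hamiltonian $H=\epsilon(v_\pl^2/2+\mu|B|)$ and not in the symplectic form $\omega=\beta+\epsilon d(v_\pl b^\flat)$. So let $U=(u,w)$ with $u=u(x,v_\pl;\mu;\epsilon)$, $w=w(x,v_\pl;\mu;\epsilon)$, expanded to first order in $\epsilon$ as before, and impose $L_U\omega\approx0$, $L_UH\approx0$. The first point is that $L_U\omega\approx0$ is exactly the pair of conditions (\ref{vs1})--(\ref{vs2}) on the spatial part $u$ \emph{for each fixed $\mu$} (since $\omega$ is $\mu$-independent and the velocity component $w$ drops out of $L_U\omega$ only after we know $w=0$; but in any case wedging with $dv_\pl$ kills $w\,\partial_{v_\pl}$ contributions the same way as in Theorem~\ref{appvs}). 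The subtlety is that now these are to be read as holding identically in $\mu$, so $u_0$ and $u_1$ may each be $\mu$-dependent spatial-and-$v_\pl$ vector fields.

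The crux is the Hamiltonian condition. Compute $L_UH\approx0$: to order $\epsilon$ this reads $v_\pl w_0+\mu L_{u_0}|B|=0$, i.e. $w_0=-\mu\,L_{u_0}|B|/v_\pl$. In Theorem~\ref{appvs} one argued ``for all $\mu$'' to conclude $w_0=0$ and $L_{u_0}|B|=0$; here we cannot do that directly. Instead I would differentiate with respect to $\mu$. Since $\omega$ is $\mu$-independent, $\partial_\mu U$ lies in $\ker\omega$ to the relevant order by the first batch of conditions (applying $\partial_\mu$ to $L_U\omega\approx0$ gives $L_{\partial_\mu U}\omega\approx0$, and combined with $i_{\partial_\mu U}\omega$ being forced into $\ran\omega$ via the $H$-condition differentiated in $\mu$, one gets $\partial_\mu U\in\ker\omega$, hence by Proposition~\ref{gckernel} $\partial_\mu U=\epsilon S_1$ with $S_1\in\ker\beta$, i.e. $\partial_\mu U$ is a trivial symmetry). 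This is precisely the statement that $U$ is, modulo trivial symmetries, independent of $\mu$; and a $\mu$-independent symmetry is a genuine approximate Hamiltonian symmetry, to which Theorem~\ref{appvs} applies and forces $w=0$.

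More concretely, the key steps in order: (1) expand $U$ in $\epsilon$ with coefficients allowed to depend on $\mu$, and write out $L_U\omega\approx0$ and $L_UH\approx0$ order by order; (2) observe $L_U\omega\approx0$ gives (\ref{vs1})--(\ref{vs2}) for each $\mu$, and in particular no $\mu$-dependence enters the $\omega$-side; (3) apply $\partial_\mu$ to both conditions, using $\partial_\mu\omega=0$ and $\partial_\mu H=\epsilon|B|$, to get $L_{\partial_\mu U}\omega\approx0$ and $i_V i_{\partial_\mu U}\omega=-L_V(\partial_\mu K)$ with $\partial_\mu K$ forcing $i_{\partial_\mu U}\omega\in\ran\omega$; (4) conclude $\partial_\mu U\in\ker\omega$, so by Proposition~\ref{gckernel} $\partial_\mu U=\epsilon(fb,g)$ for some functions $f,g$, i.e.\ $\partial_\mu U$ is trivial; (5) integrate in $\mu$ to write $U=\bar U+\epsilon(\text{trivial})$ with $\bar U$ a $\mu$-independent vector field generating an approximate Hamiltonian symmetry for all $\mu$; (6) invoke Theorem~\ref{appvs} on $\bar U$ to get $\bar w=0$ and (\ref{vs1})--(\ref{vs3}), whence $U$ equals a phase-space Hamiltonian symmetry up to trivial symmetries.

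The main obstacle I anticipate is step~(3)--(4): justifying cleanly that $i_{\partial_\mu U}\omega$ lands in $\ran\omega$ so that Proposition~\ref{gckernel} (the characterisation of $\ker\omega$) can be applied, rather than merely concluding $L_{\partial_\mu U}\omega\approx 0$ (which alone does not force $\partial_\mu U\in\ker\omega$, only that $i_{\partial_\mu U}\omega$ is closed). This needs the Hamiltonian half: differentiating $L_UH\approx0$ in $\mu$ gives $L_{\partial_\mu U}H + \epsilon\,L_{\bar U\text{ or }u_0}|B|\approx 0$, and one must massage this via (\ref{Lhamiltonian}) into a statement that $i_V i_{\partial_\mu U}\omega$ is exact-ish, then use the guiding-centre structure (as in Corollary~\ref{gcnoether}) to pin down $i_{\partial_\mu U}\omega\approx -d(\partial_\mu K)$ with $\partial_\mu K$ a flux function, hence in $\ran\omega$. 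Once that is in place the rest is bookkeeping identical to Theorem~\ref{appvs}.
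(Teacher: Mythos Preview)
Your target conclusion --- that $\partial_\mu U$ lies in $\ker\omega$, so $U$ is $\mu$-independent modulo trivial symmetries --- is exactly what the paper establishes. But the route you propose does not reach it, and the obstacle you flag in step~(4) is genuine, not a technicality you can massage away. Differentiating $L_U\omega\approx0$ in $\mu$ yields only $L_{\partial_\mu U}\omega\approx0$, i.e.\ $i_{\partial_\mu U}\omega$ is approximately \emph{closed}; you need it approximately \emph{zero}. Your proposed bridge via the $H$-condition does not close the gap: the statement ``$i_{\partial_\mu U}\omega\in\ran\omega$'' is vacuous (every $i_X\omega$ is in $\ran\omega$ by definition), and even if you obtain $i_{\partial_\mu U}\omega\approx-d(\partial_\mu K)$ with $\partial_\mu K$ a flux function, that merely says $\partial_\mu U$ is a Hamiltonian vector field for $\partial_\mu K$, not that it lies in $\ker\omega$ --- for the latter you would need $\partial_\mu K$ \emph{constant}. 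And the differentiated energy condition reads $L_{\partial_\mu U}H\approx-\epsilon L_{u_0}|B|$, which carries the unknown $L_{u_0}|B|$ (you have given up the ``for all $\mu$'' argument that would kill it), so it does not help.

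The paper's argument is structurally different and in fact never invokes $L_UH\approx0$. Rather than \emph{differentiating} the 2-form conditions in $\mu$, it \emph{contracts} them with $\partial_\mu$, exactly as Theorem~\ref{appvscons} contracts with $\partial_{v_\pl}$. Because $\beta$ and $b^\flat$ are $\mu$-independent, $i_{\partial_\mu}L_{u_0}\beta=i_{\partial_\mu u_0}\beta$, so the zeroth-order condition gives the \emph{pointwise} identity $\partial_\mu u_0\times B=0$ directly --- not merely closedness of $i_{\partial_\mu u_0}\beta$. Contracting the first-order condition (which here is $d(v_\pl L_{u_0}b^\flat+w_0 b^\flat)+L_{u_1}\beta=0$, not (\ref{vs2}): you cannot set $w_0=0$ before knowing $\mu$-independence) with $\partial_\mu$ then supplies $\partial_\mu u_0\cdot b=0$ from its $dv_\pl$-part, hence $\partial_\mu u_0=0$; dotting with $b$ gives $\partial_\mu w_0=0$; and the remainder is $\partial_\mu u_1\times B=0$, i.e.\ $\partial_\mu u_1$ is parallel to $b$ and therefore trivial. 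The contraction step is the missing idea: it extracts the $d\mu$-component of $di_U\omega$ and delivers $i_{\partial_\mu U}\omega$-type information immediately, bypassing the closed-versus-zero difficulty altogether.
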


\begin{proof}
Let $U(x,v_\pl,\mu,\epsilon)=U_0(x,v_\pl,\mu)+\epsilon U_1(x,v_\pl,\mu)$ be the symmetry generator on $M$ with $U_i=(u_i,w_i)$, $i=1,2$, where $w_1=0$ up to trivial symmetries. We work our way partly through Theorems \ref{appvs} and \ref{appvscons} and modify them suitably.

From $L_U\omega\approx0$, the zeroth-order terms give $L_{U_0}\beta=0$, which reduces again to (\ref{vs1}), since $\beta$ is a spatial form on $Q$. But the first-order terms yield $L_{U_0}d(v_\pl b^\flat)+L_{U_1}\beta=0$, and note that $L_{U_0}(v_\pl b^\flat)=w_0 b^\flat+v_\pl L_{u_0}b^\flat$, as $b^\flat$ is spatial too. So now, instead of (\ref{vs2}) the second symmetry condition reads
\begin{equation}
\label{mu1}
d(v_\pl L_{u_0}b^\flat+w_0 b^\flat)+L_{u_1}\beta=0.
\end{equation}

Take then the $\mu$-component of (\ref{vs1}) and (\ref{mu1}), i.e., contract them with $\partial_\mu$. Similarly to Theorem \ref{appvscons}, we find
\begin{eqnarray}
\label{mu2}
i_{\partial_\mu u_0}i_B\Omega=0,\\
\label{mu3}
-\,(i_{\partial_\mu u_0}b^\flat)dv_\pl+v_\pl i_{\partial_\mu u_0}i_c\Omega+(\partial_\mu w_0)b^\flat+i_{\partial_\mu u_1}i_B\Omega=0,
\end{eqnarray}
respectively. The $v_\pl$-component of (\ref{mu3}) gives $\partial_\mu u_0\cdot b=0$ and together with (\ref{mu2}), that is, $\partial_\mu u_0\times B=0$, they deliver $\partial_\mu u_0=0$. Dotting (\ref{mu3}) with $b$, we also find $\partial_\mu w_0=0$. Then (\ref{mu3}) reduces to $\partial_\mu u_1\times B=0$, which says that $\partial_\mu u_1=0$ up to trivial symmetries. Putting it all together, we conclude that $U$ is independent of $\mu$.
\end{proof}

\section{Relation to magnetohydrostatics}

So far quasisymmetry and symmetries in general of guiding-centre motion were treated independently of any other assumption on the magnetic field. In this section, we study approximate Hamiltonian symmetries in the presence of magnetohydrostatics (MHS),
\begin{equation}
\label{mhs}
J\times B=\nabla p,
\end{equation}
where $J=\curl B$ is the current density and $p$ is the scalar plasma pressure. This can be viewed as an extra restriction for the magnetic field that can be added to the previous symmetry conditions.

\begin{thm}
\label{appvsmhs}
For an MHS magnetic field with $dp\neq0$ a.e.~on $Q$ and density of irrational surfaces, every approximate Hamiltonian symmetry of FGC motion is an approximate quasisymmetry.
\end{thm}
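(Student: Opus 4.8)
The plan is to show that the extra parallel-velocity freedom allowed by Theorem~\ref{appvs2} collapses in the presence of MHS, forcing the symmetry generator to be purely spatial and hence, by Theorem~\ref{appqs}, a quasisymmetry. From Theorems~\ref{appvs}, \ref{appvscons} and \ref{appvs2}, an approximate Hamiltonian symmetry has $w=0$, satisfies $L_{u_0}\beta=0$, $\divr u_0=0$, $L_{u_0}|B|=0$, $[u_0,B]=0$, and has $u_1$ determined by \eqref{u1}–\eqref{u1c2} in terms of $u_0$ and the generalised flux function $\psi_1$. The key point is that quasisymmetry (Theorem~\ref{appqs}) additionally requires $L_{u_0}b^\flat=0$ with $u_0$ spatial; so the whole task is to deduce, from MHS, both that $\partial_{v_\pl}u_0=0$ and that $L_{u_0}b^\flat=0$ (equivalently $X_0=\curl b\times u_0+\nabla(u_0\cdot b)=0$).

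First I would exploit the flux functions. We have $B\cdot\nabla\psi_0=0$ from \eqref{vs11}, and the MHS equation \eqref{mhs} gives $B\cdot\nabla p=0$; since $dp\neq0$ almost everywhere, $p$ foliates $Q$ (a.e.) by flux surfaces. On an irrational surface $\{p=\text{const}\}$ a function annihilated by $B$ must be constant, so $\psi_0$ is (locally) a function of $p$ alone; using the assumed density of irrational surfaces, $\psi_0=\psi_0(p)$ globally on the relevant region, hence $d\psi_0=\psi_0'(p)\,dp$. Combining with \eqref{vs11}, $i_{u_0}i_B\Omega=\psi_0'(p)\,dp=\psi_0'(p)\,i_{J\times B}^{}{}$... more precisely $\nabla p=\psi_0'^{-1}(u_0\times B)$ pointwise, so $u_0$ and $J$ differ (on each surface, modulo the $B$-direction) by a scalar; this is exactly the structure that in \cite{BKM} pins down quasisymmetry. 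The same argument applied to $\psi_1$ via \eqref{vs21}, together with $B\cdot\nabla\psi_1=0$, shows $\psi_1$ is also a flux function of $p$ on irrational surfaces, and then \eqref{u1c2} forces $v_\pl\,b\cdot\partial_{v_\pl}u_0=\partial_{v_\pl}\psi_1$; since $\psi_1$ depends only on $p$ (spatial), $\partial_{v_\pl}\psi_1=0$, so $b\cdot\partial_{v_\pl}u_0=0$, and combined with $\partial_{v_\pl}u_0\times B=0$ from Lemma~\ref{appvslem} we get $\partial_{v_\pl}u_0=0$: the leading-order generator is spatial.

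With $u_0$ spatial, Lemma~\ref{appvslem} gives $\psi_1$ spatial as well, and what remains is to kill $X_0$, i.e.\ to show $L_{u_0}b^\flat=0$. Here I would feed $[u_0,B]=0$, $\divr u_0=0$, $L_{u_0}|B|=0$ and $u_0\times B=\psi_0'(p)^{-1}\nabla p$ (so $u_0$ is tangent to the flux surfaces and, modulo $b$, proportional to $J$) into the identities $[u_0,J]=0$ and $L_{u_0}(u_0\cdot b)=0$ quoted from \cite{BKM} after Theorem~\ref{appqs}; these are precisely the relations that, under the MHS constraint, were shown in \cite{BKM} to be equivalent to exact quasisymmetry, and in particular to $L_{u_0}b^\flat=0$. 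Concretely, writing $u_0\cdot b = h$ and using $b^\flat = |B|^{-2}B^\flat$, one computes $L_{u_0}b^\flat = |B|^{-2}L_{u_0}B^\flat - 2|B|^{-3}(L_{u_0}|B|)B^\flat$; the second term vanishes, and $L_{u_0}B^\flat = i_{u_0}dB^\flat + d(u_0\cdot B) = i_{u_0}i_J\Omega + d(u_0\cdot B)$, which one rewrites using $u_0\times B\parallel\nabla p$ and $J\times B=\nabla p$ to show it is exact, then a flux-surface/irrationality argument on the closed $1$-form forces it to vanish. Invoking Theorem~\ref{appqs} with $u_0$ spatial and $L_{u_0}b^\flat=0$, $L_{u_0}\beta=0$, $L_{u_0}|B|=0$ (and $L_{u_1}\beta=0$, which is automatic once $u_1$ is trivial) then concludes that $u$ generates an approximate quasisymmetry.

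The main obstacle I anticipate is the last step — showing $X_0=0$ rather than merely $b\cdot X_0=0$. Getting $\partial_{v_\pl}u_0=0$ is essentially forced by the flux-function structure and \eqref{u1c2}, but ruling out a nonzero spatial $X_0$ (equivalently, a genuine weak quasisymmetry) requires genuinely using the \emph{global/topological} hypotheses: that $dp\neq0$ a.e.\ and that irrational surfaces are dense. The natural mechanism is that $L_{u_0}b^\flat$, being a closed $1$-form whose contraction with $B$ vanishes, restricts to a closed $1$-form on each flux surface, and on an irrational torus a closed $1$-form that is also ``$B$-invariant'' in the appropriate sense must vanish; pushing this from a dense set of surfaces to all of $Q$ (a.e.) by continuity is where the density hypothesis does its work, and making that rigorous — rather than hand-waving ``integrable $\Rightarrow$ quasisymmetry'' — is the delicate part. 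I would lean on the corresponding argument in \cite{BKM} for the exact case and check that nothing in it uses more than the zeroth-order conditions, which are identical here.
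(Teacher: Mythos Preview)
Your argument for $\partial_{v_\pl}u_0=0$ has a genuine gap. From $B\cdot\nabla\psi_1=0$ and density of irrational surfaces you can only conclude that, \emph{for each fixed} $v_\pl$, the function $\psi_1(\,\cdot\,,v_\pl)$ is constant on flux surfaces, i.e.\ $\psi_1=f(p,v_\pl)$. This does \emph{not} give $\partial_{v_\pl}\psi_1=0$. In fact Lemma~\ref{appvslem} says $\psi_1$ is spatial iff $u_0$ is spatial, so your route is circular: you are trying to deduce $u_0$ spatial from $\psi_1$ spatial, but you have no independent way to make $\psi_1$ spatial. The paper's argument here is genuinely different and is where MHS does real work: one writes $J=\kappa\,u_0+\lambda B$ on each flux surface (possible since $u_0\times B=\nabla\psi_0\neq0$ makes $u_0,B$ a frame there), finds $\kappa=-p'$ spatial, then uses the MHS identity $L_BB^\flat=d(p+|B|^2)$ together with $[u_0,B]=0$ and $L_{u_0}|B|=0$ to get $L_B(u_0\cdot B)=0$; differentiating in $v_\pl$ and using $L_B\lambda=0$ forces $\partial_{v_\pl}\lambda=0$ (where $L_B|B|\neq0$) and hence $\partial_{v_\pl}u_0=0$ where $dp\neq0$.

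For the step $L_{u_0}b^\flat=0$, your instinct to invoke \cite{BKM} is right, but the mechanism you describe (``closed $1$-form on an irrational torus that is $B$-invariant must vanish'') is not the one that works, and $L_{u_0}b^\flat$ is not closed in general. The paper first shows $C:=u_0\cdot B$ and $\lambda$ are flux functions (this is where density of irrational surfaces is used, promoting $L_B$-invariance to $L_{u_0}$-invariance), computes $L_{u_0}B^\flat=i_{u_0}i_J\Omega+dC=(\lambda+C')\,d\psi_0$, and then kills this by \emph{circle-averaging}: the zeroth-order conditions imply $u_0$ generates a local circle action (\cite{BKM}, Thm.~VIII.2), the average of $L_{u_0}(\text{anything})$ vanishes, and $(\lambda+C')\,d\psi_0$ is $u_0$-invariant, so it equals its own average, hence is zero. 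Your sketch would become correct if you replaced the torus/irrationality heuristic at this last step by that averaging argument.
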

\begin{proof}
First of all, write (\ref{mhs}) as $i_Bi_J\Omega=dp$ and note that $dB^\flat=i_J\Omega$. For any MHS field,
\begin{eqnarray}
\label{rel1}
L_BB^\flat&=i_BdB^\flat+di_BB^\flat=d(p+|B|^2),\\
i_{[J,B]}\Omega&=i_JL_B\Omega-L_Bi_J\Omega=-\,L_BdB^\flat=-\,dL_BB^\flat=0,
\end{eqnarray}
the latter implying $[J,B]=0$, since $\Omega$ is nondegenerate.

Now let $u=u_0+\epsilon u_1$ be the generator of an approximate Hamiltonian symmetry. By Lemma \ref{appvslem}, we have (\ref{vs11}) displayed here, $i_{u_0}i_B\Omega=d\psi_0$, from the first symmetry condition (\ref{vs1}), and by assumption $p=p(\psi_0)$. If $B$ is MHS, then
\begin{equation}
\label{rel2}
L_B(u_0\cdot B)=i_{u_0}L_BB^\flat=i_{u_0}dp+i_{u_0}d|B|^2=0,
\end{equation}
using $[u_0,B]=0$ from Theorem \ref{appvscons}, equation (\ref{rel1}), $i_{u_0}d\psi_0=0$ from (\ref{vs11}), and $i_{u_0}d|B|=0$ from the third symmetry condition (\ref{vs3}).

Next we are going to prove that $u_0$ is independent of $v_\pl$. To this end, note first that $J$ is tangent to flux surfaces and so
\begin{equation}
\label{rel3}
J=\kappa\,u_0 + \lambda B
\end{equation}
for some functions $\kappa,\lambda$. Crossing with $B$ gives $\kappa=-p'$. Applying $L_B$ to (\ref{rel3}), we deduce $L_B\lambda=0$, because $B$ commutes with $J$ and $u_0$, and $\kappa$ is a flux function. Taking the $v_\pl$-derivative of (\ref{rel3}), we get
\begin{equation}
\label{rel4}
\kappa\,\partial_{v_\pl}\!u_0 + (\partial_{v_\pl}\lambda)B=0,
\end{equation}
since $\kappa,J,B$ are spatial. Finally, the $v_\pl$-derivative of (\ref{rel2}) gives $\partial_{v_\pl}\lambda=0$ for $L_B|B|\neq0$. To see this, dot (\ref{rel4}) with $B$ and insert it, so
\begin{equation}
0=L_{\partial_{v_\pl}}L_B(u_0\cdot B)=L_BL_{\partial_{v_\pl}}(u_0\cdot B)=-\,\kappa^{-1}(\partial_{v_\pl}\lambda)L_B|B|^2,
\end{equation}
since $L_B\lambda=0$ and $L_B\kappa=0$. Substituting $\partial_{v_\pl}\lambda=0$ in (\ref{rel4}), we conclude $\partial_{v_\pl}\!u_0=0$ for $dp\neq0$.

By density of irrational surfaces, (\ref{rel2}) implies $L_{u_0}(u_0\cdot B)=0$ too. Likewise, $L_{u_0}\lambda=0$ from $L_B\lambda=0$. Thus, $C=u_0\cdot B$ and $\lambda$ are flux functions. Therefore we can write
\begin{equation}
\label{rel5}
L_{u_0}B^\flat=i_{u_0}dB^\flat+di_{u_0}B^\flat=i_{u_0}i_J\Omega+dC=(\lambda+C')d\psi
\end{equation}

Since $u_0$ is spatial, the conditions $i_{u_0}i_B\Omega=d\psi_0$, $[u_0,B]=0$ and $L_{u_0}|B|=0$ imply $u_0$ generates a local circle action on $Q$. See \cite{BKM}, Definition VIII.1 of a circle action and Theorem VIII.2(i) for a proof, as well as (59) there for the definition of circle-average. Circle-averaging (\ref{rel5}) gives then $0=(\lambda+C')d\psi$, because the average of $L_{u_0}$ of anything is zero and $\lambda,C$ are constant along $u_0$. Therefore $L_{u_0}B^\flat=0$.

Together with (\ref{vs3}), this gives $L_{u_0}b^\flat=0$ and so $X_0=0$, and then (\ref{vs2}) reduces to $L_{u_1}\beta=0$. Since $u_0$ is spatial, $\psi_1$ is too from Lemma \ref{appvslem}. Consequently, so is $u_1$ from (\ref{u1}), which completes the proof.
\end{proof}

\section{Discussion}

Compared to \cite{BKM}, Theorem \ref{appqs} shows that approximating quasisymmetry under the guiding-centre precision is to lowest order the same as exact quasisymmetry. While one might hope that the notion of quasisymmetry could be relaxed using approximate spatial (Hamiltonian) symmetries of guiding-centre motion, this theorem shows that it is impossible: if one insists that an approximate Hamiltonian symmetry is spatial then that symmetry must be a quasisymmetry. 
This is not much unexpected, since the quasisymmetry conditions were derived for all nonzero $q,m,\mu$ \cite{BKM}. Another way of seeing this is to note that $\epsilon$- and $v_\pl$-terms appear together in the Hamiltonian (or Lagrangian) formulation. Other spatial ways to approximate quasisymmetry could be perhaps more effective, as, for example, expansions near the magnetic axis.

Among the three conditions of quasisymmetry, $L_ub^\flat=0$ seems the most likely candidate to relax. Not included in earlier treatments, its necessity was first recognised in \cite{BQ}. Theorem \ref{appvs} with Lemma \ref{appvslem} say that an approximate phase-space Hamiltonian symmetry of guiding-centre motion does indeed weaken this condition to $v_\pl L_{u_0}b^\flat+i_{u_1}i_B\Omega=d\psi_1$. All the same, the remaining two conditions remain unchanged, providing flux surfaces and symmetric field strength. More explicitly, Theorem \ref{appvs2} shows that $L_{u_0}b^\flat$ is basically pushed back to the next-order term of the symmetry, the only restriction between $u_0$ and $\psi_1$ being their velocity dependence. Given Theorem \ref{appvscons} and Remark \ref{appvsconseq}, the arbitrariness of $L_{u_0}b^\flat$ is slightly limited to $i_bL_{u_0}b^\flat=0$, which is equivalent to $\divr u_0=0$ under the other symmetry conditions.

In conclusion, an approximate Hamiltonian $v_\pl$-symmetry generated by $u_0+\epsilon u_1$ satisfies the conditions (\ref{vs0}) to zero order and the first-order term is given by (\ref{u1})-(\ref{u1c2}). In the other direction, Theorem \ref{thm:weakvs} shows that a spatial vector field $u_0$ that satisfies (\ref{vs0}) for $L_{u_0}b^\flat\neq0$ with a second spatial flux function $\psi_1$ is the zeroth-order term of a weak quasisymmetry (Definition \ref{dfn:weakvs}) generator with $u_1$ given by (\ref{u1}). We may even extend this and say that a $v_\pl$-dependent vector field $u_0$ that satisfies (\ref{vs0}) and (\ref{u1c2}) for $L_{u_0}b^\flat\neq0$ and a $v_\pl$-dependent flux function $\psi_1$, is the zeroth-order term of an approximate Hamiltonian $v_\pl$-symmetry with $u_1$ given by (\ref{u1}).

Under the typical requirement of magnetohydrostatics though, every Hamiltonian $v_\pl$-symmetry is spatial and reduces to quasisymmetry again according to Theorem \ref{appvsmhs}.

The approximate constant of motion $K$ coming from an approximate Hamiltonian phase-space symmetry generalises the exact one derived for exact quasisymmetry in two ways. The first one is by introducing approximate flux surfaces via $\psi_1$ and the second one is its nonlinear character in $v_\pl$, when $u_0$ is not spatial. In any case, there is the question whether the first-order conserved quantity $K$ extends to higher orders leading to an adiabatic invariant. Repeating the symmetry analysis for approximate symmetries of higher order,  one imagines building $U$ and therefrom $K$ order by order. As the order increases, variations of $K$ would remain slow over larger time intervals. Ultimately, one would deduce an asymptotic series for $K$, which delivers variations of order $\epsilon$ over very long times, making it an adiabatic invariant assuming convergence.

Rodr\'iguez \textit{et al} \cite{RHB} introduced the notion of a weakly quasisymmetric magnetic field and argued that (for non-MHS fields) weak quasisymmetry implies that FGC motion admits (a) an approximate spatial symmetry, and (b) an approximate constant of motion. Their treatment does not consider first-order corrections to flux surfaces, i.e., they treat the subcase $\psi_1=0$. We partly agree with (b), but disagree with (a). While the weak quasisymmetry conditions for $\psi_1=0$ do indeed imply the existence of an approximate conserved quantity, namely $p$, (Proposition \ref{RHBi}), which is directly analogous to the case of exact quasisymmetry, for spatial $\psi_1\neq0$ they imply  a more general approximate conserved quantity, namely $K$ in (\ref{invariant}). Moreover, while the weak quasisymmetry conditions do imply the existence of an approximate symmetry for FGC motion, they do not imply the existence of an approximate spatial symmetry. Instead, our Proposition \ref{RHBs}, based on the one-to-one correspondence between symmetries and invariants, shows that the approximate symmetry associated with weak quasisymmetry even for zero $\psi_1$, namely $u_p$, acts non-trivially on both the guiding-centre position and parallel velocity, i.e., there is no way to regard the symmetry as operating in configuration space alone. Thus, Rodr\'iguez \textit{et al} correctly identify the conserved quantity associated with weak quasisymmetry for zero $\psi_1$, but incorrectly identify the infinitesimal generator of the corresponding phase-space symmetry.



It could be that there are no quasisymmetries (with bounded flux surfaces) other than axisymmetry. The quest for more general symmetries becomes then imperative as a means to relax the quasisymmetry notion. One such option is the longitudinal or second adiabatic invariant coming from a nonlocal symmetry, and the related concept of omnigeneity as a confinement condition (a sufficient one, but is it necessary?). A more direct generalisation, adopted here, was to allow symmetries on phase space instead of restricting to configuration space, and so involve the parallel velocity. Gyrosymmetry after all invokes the perpendicular velocity. Velocity-dependent symmetries could be of use, at least when it comes to guiding-centre integrability. Here we have made a first step of relaxing the requirement of a spatial symmetry by considering parallel-velocity-dependent symmetries within the approximate setup. Others could follow.

\ack

We would like to thank Eduardo Rodr\'iguez, Per Helander and Amitava Bhattacharjee for valuable discussions and helpful interaction. This work was supported by a grant from the Simons Foundation (601970, RSM) and by the Los Alamos National Laboratory LDRD program under project number 20180756PRD4.

\appendix

\section{Gyrosymmetry}\label{GSappendix}
Here we construct the gyrosymmetry generator $U$ up to zeroth-order terms from the magnetic moment $K$, using $i_U\omega=-\,dK$, i.e., we prove relations (\ref{gs1})-(\ref{gs2}).

Recall that the symplectic form of charged particle motion is $\omega=\beta+\epsilon\,dv\wedge dx$. 
As it will soon become apparent, $U$'s leading order is less by two than $K$'s, so let
\begin{eqnarray}
\label{K}
U&=U_0+\epsilon U_1+\cdots,\\
K&=\epsilon^2 K_2+\epsilon^3 K_3+\cdots.
\end{eqnarray}

First of all, note that for $i\geq0$ the $(i+1)$-th order terms of $i_U\omega=-\,dK$ split by $dv$ and $dx$ into
\begin{eqnarray}
\label{UK1}
U_i^x=\partial_vK_{i+1},\\
\label{UK2}
U_i^v+B\times U_{i+1}^x=-\,\partial_xK_{i+1},
\end{eqnarray}
where $U_i^x$ and $U_i^v$ are the spatial and velocity components of $U_i$, respectively. The second equation shows that the velocity components of $U_i$ are determined by the spatial ones of $U_{i+1}$ and the two combined together that in order to find $U_i$, we need both $K_{i+1}$ and $K_{i+2}$. In other words, a nonzero $K_{i+2}$ introduces a nonzero $U_i$, as we can see from
\begin{equation}
\label{UK3}
U_i^v=-\,\partial_xK_{i+1}-B\times\partial_vK_{i+2}
\end{equation}

Thus to build $U$ up to $U_1$, besides $K_2=v_\perp^2/2|B|$ we need $K_3$, as well. Following Littlejohn \cite{L1}, we take
\begin{equation}
\fl K_3=|B|^{-2}\left[K_2\,n\cdot\nabla|B|+v_\pl^2\,b\cdot\nabla b\cdot n+v_\pl\left(3\,n\cdot\nabla b\cdot v_\perp-v_\perp\cdot\nabla b\cdot n\right)\!/4\right],
\end{equation}
where $n=b\times v_\perp$. To apply (\ref{UK1})-(\ref{UK3}), we write down the required derivatives of $v_\pl$, $v_\perp$ and $n$ as functions of $x$ and $v$,
\begin{eqnarray}
\partial_{v_j}v_\pl=b^j,\\
\partial_{v_j}v_\perp=v_\perp^{-2}(v_\perp^jv_\perp+n^jn),\\
\partial_{v_j}n=v_\perp^{-2}(v_\perp^jn-n^jv_\perp),\\
\nabla(v_\perp^2)=-2v_\pl(v\times c+v\cdot\nabla b),
\end{eqnarray}
where $j=1,2,3$ and $c=\curl b$. For the forthcoming calculations, note also the vector identities
\begin{eqnarray}
&b\cdot\nabla|B|=-|B|\,\divr b,\\
&v_\perp\cdot\nabla b\cdot n-n\cdot\nabla b\cdot v_\perp=v_\perp^2\,b\cdot c,\\
&v_\perp\cdot\nabla b\cdot v_\perp+n\cdot\nabla b\cdot n=v_\perp^2\,\divr b.
\end{eqnarray}

Now, (\ref{UK1}) and (\ref{UK3}) for $i=0$ give (\ref{gs1}), and (\ref{UK1}) for $i=1$ gives the spatial component of (\ref{gs2}). For the velocity part, (\ref{UK3}) for $i=1$ yields
\begin{eqnarray}
\nonumber\fl U_1^v&=|B|^{-1}\!\left\{K_2\nabla|B|+v_\parallel(v\times c+v\cdot\nabla b)\right.\\
\nonumber\fl &\quad-v_\perp^{-2}\left[K_2\,v_\perp\cdot\nabla|B|+v_\pl^2\,b\cdot\nabla b\cdot v_\perp+v_\pl\left(v_\perp\cdot\nabla b\cdot v_\perp-n\cdot\nabla b\cdot n\right)\!/2\right]v_\perp\\
\fl &\quad\left.-\,v_\perp^{-2}\left[3K_2\,n\cdot\nabla|B|+v_\pl^2\,b\cdot\nabla b\cdot n+v_\pl\left(3\,n\cdot\nabla b\cdot v_\perp-v_\perp\cdot\nabla b\cdot n\right)\!/2\right]n\right\}\\
\nonumber\fl &=|B|^{-1}\!\left\{-2K_2(n\cdot\nabla|B|)n-v_\perp^2(\divr b)b/2+v_\pl v\times c+v_\pl(b\cdot c)n\right.\\
\fl&\qquad\quad\left.+\,v_\pl\left[v_\perp\cdot\nabla b+(n\cdot\nabla b)\times b\right]\!/2\right\}\\
\nonumber\fl&=|B|^{-1}\!\left\{-2K_2(n\cdot\nabla|B|)n+v_\pl v\times c+v_\pl(b\cdot c)n\right.\\
\fl&\qquad\quad\left.+\left[\left(b\times(v_\perp\cdot\nabla b)\right)\times v+(n\cdot\nabla b)\times v\right]\!/2\right\}.
\end{eqnarray}




\section{Vector calculus formulation}\label{VCappendix}
Below we give some of the expressions in Theorems \ref{appqs}-\ref{appvslem} in vector calculus notation. As before, $X_0=\curl b\times u_0 + \nabla(u_0\cdot b)$ and $\partial_{v_\pl}$ is short for $\partial/\partial_{v_\pl}$.


\begin{table}[ht]
\caption{Expressions in vector calculus.}
\footnotesize
\centering
\begin{tabular}{ll}
\br
Differential forms\quad&Vector calculus\\
\br
$L_{u_0}\beta=0$&$\curl(B\times u_0)=0$ and $\partial_{v_\pl}(B\times u_0)=0$\\
\mr
$L_{u_1}\beta+d(v_\pl L_{u_0}b^\flat)=0$ & $\curl(B\times u_1+v_\pl X_0)=0$ and $\partial_{v_\pl}(B\times u_1+v_\pl X_0)=\nabla[v_\pl\partial_{v_\pl}(u_0\cdot b)]$\\
\mr
$L_{u_0}|B|$&$u_0\cdot\nabla|B|$\\
\mr
$[u_0,B]$&$(u_0\cdot\nabla)B-(B\cdot\nabla)u_0$\\
\mr
$i_bL_{u_0}b^\flat$ & $b\cdot X_0$\\
\mr
$L_{u_0}b^\flat=i_Bi_{\partial_{v_\pl}\!u_1}\Omega$ & $X_0=\partial_{v_\pl}\!u_1\times B$\\
\mr
$i_{u_0}i_B\Omega=d\psi_0$&$B\times u_0=\nabla\psi_0$\\
\mr
$i_{u_1}i_B\Omega+v_\pl L_{u_0}b^\flat=d\psi_1$ & $B\times u_1+v_\pl X_0=\nabla\psi_1$ and $v_\pl\partial_{v_\pl}(u_0\cdot b)=\partial_{v_\pl}\psi_1$\\
\br
\end{tabular}
\end{table}


\noappendix

\section*{References}

\end{document}